\newcommand{\CR}{\textsc{Cops and Robber}\xspace}
\newcommand{\ACR}{\textsc{Fully Active Cops and Robber}\xspace}
\newcommand{\R}{ \mathcal{R}}
\newcommand{\C}{ \mathcal{C}}
\newtheorem{theorem}{Theorem}
\crefname{theorem}{theorem}{theorems}
\newtheorem{lemma}[theorem]{Lemma}
\crefname{lemma}{lemma}{lemmas}
\newtheorem{proposition}[theorem]{Proposition}
\crefname{proposition}{proposition}{propositions}
\crefname{result}{result}{results}
\newtheorem{corollary}[theorem]{Corollary}
\crefname{corollary}{corollary}{corollaries}
\crefname{fact}{fact}{facts}
\newtheorem{observation}[theorem]{Observation}
\crefname{observation}{observation}{observations}
\newtheorem{question}[theorem]{Question}
\crefname{question}{question}{questions}
\crefname{claim}{claim}{claims}
\crefname{note}{note}{notes}
\crefname{conj}{conjecture}{conjectures}
\crefname{definition}{definition}{definitions}
\crefname{remark}{remark}{remarks}
\tikzstyle{noeud}=[circle,inner sep=2, minimum size =3 pt, line width = 1pt, draw=black, fill=white]
\title{On the Cop Number of String Graphs\thanks{A preliminary version of this paper appeared in ISAAC 2022~\cite{ourString}}}
\author{
Sandip Das\footnote{Indian Statistical Institute, Kolkata, India}
\and 
\and Harmender Gahlawat\footnote{G-SCOP, Grenoble-INP, France}
}
\begin{document}

\maketitle

\begin{abstract}
\textsc{Cops and Robber} is a well-studied two-player pursuit-evasion game played on a graph, where a group of cops tries to capture the robber. The \emph{cop number} of a graph is the minimum number of cops required to capture the robber. Gavenčiak et al.~[Eur. J. of Comb. 72, 45--69 (2018)] studied the game on intersection graphs and established that the cop number for the class of string graphs is at most 15, and asked as an open question to improve this bound for string graphs and subclasses of string graphs. We address this question and establish that the cop number of a string graph is at most 13. To this end, we develop a novel \textit{guarding} technique. We further establish that this technique can be useful for other \textsc{Cops and Robber} games on graphs admitting a representation. In particular, we show that four cops have a winning strategy for a variant of \textsc{Cops and Robber}, named \textsc{Fully Active Cops and Robber}, on planar graphs, addressing an open question of Gromovikov et al.~[Austr. J. Comb. 76(2), 248--265 (2020)]. In passing, we also improve the known bounds on the cop number of boxicity 2 graphs. Finally, as a corollary of our result on the cop number of string graphs, we establish that the chromatic number of string graphs with girth at least 5 is at most $14$. 
\end{abstract}

\section{Introduction}
\CR is a two-player perfect information pursuit-evasion game played on a graph. 
One player is referred as \textit{cop player}, who controls a set of \textit{cops}, and the other player is referred as \textit{robber player} and controls a single \textit{robber}. 
The game starts with the cop player placing each cop on some vertex of the graph, and multiple cops may simultaneously occupy the same vertex. Then, the robber player places the robber on a vertex of the graph. 
Afterwards, the cop player and the robber player make alternate moves, starting with the cop player. 
In a cop player move, each cop either moves to an adjacent vertex (along an edge) or stays on the same vertex. In the robber player move, the robber does the same. For simplicity, we will say that the cop (resp. robber) moves in a cop (resp. robber) move instead of saying that the cop (resp. robber) player moves the cop (resp. robber).

A state in the game where one of the cops occupies the same vertex as the robber is called the \textit{capture}. 
If the cops can capture the robber in a graph, then the cops \textit{win}, and if the robber can evade the capture forever, then the robber \textit{wins}. The \textit{cop number} of a graph $G$, denoted as $\mathsf{c}(G)$, is the minimum number of cops that can ensure the capture against all the strategies of the robber. For a family $\mathcal{F}$ of  graphs, $\mathsf{c}(\mathcal{F}) = \max\{ \mathsf{c}(G)~|~G \in \mathcal{F}\}$. In this paper, we consider finite, connected\footnote{The cop number of a disconnected graph is the sum of the cop numbers of its components; hence we assume connectedness.}, and simple undirected graphs. We denote the robber by $\R$.

The game of \CR was independently introduced by Quilliot~\cite{qui} and by Nowakowski and Winkler~\cite{nowakowski}, both in 1983, with just one cop. Aigner and Fromme~\cite{aigner} generalized the game to multiple cops and defined the cop number for a graph. 
The notion of cop number and some fundamental techniques introduced by Aigner and Fromme~\cite{aigner} have resulted in a plethora of rich results on this topic. For more details, we refer the reader to the book by Bonato and Nowakowski~\cite{bonatobook}.

The computational complexity of finding the cop number of a graph is a challenging question in itself. On the positive side, Berarducci and Intrigila~\cite{berarducci} provided a backtracking algorithm that decides whether the cop number of a graph is at most $k$ in $O(n^{2k+1})$ time; hence,  this is a polynomial-time algorithm for a fixed $k$.  
On the negative side, Fomin et al.~\cite{fomin} proved that determining the cop number of a graph is NP-hard as well as W[2]-hard. Moreover, the game was shown to be PSPACE-hard by Mamino~\cite{mamino} and EXPTIME-complete by Kinnersley~\cite{kinnersley}. Recently, Brandt et al.~\cite{ETHBound} provided the fine-grained lower bounds, and proved that the time complexity of any algorithm for \CR is $\Omega(n^{k-o(1)})$ conditioned on SETH, and $2^{\Omega (\sqrt{n})}$ conditioned on ETH.

A \textit{string representation} of a graph is a collection of simple curves on the plane such that each curve corresponds to a vertex of the graph, and two curves intersect if and only if the vertices they represent are adjacent in the graph. The graphs that have string representations are called \textit{string graphs}. Many important graph families like \textit{planar graphs}, \textit{chordal graph}, and \textit{disk graphs} are subfamilies of string graphs~\cite{dib1,dib11}. Pach and Toth~\cite{pachToth} proved that the number of string graphs on $n$ labeled vertices is at least $2^{\frac{3}{4} {n \choose 2}}$, arguing that many graphs are string graphs. \CR is well-studied on graphs having a representation either on the plane or on some surface of higher genus~\cite{aigner, quitor, lehner, schroeder}. Further, Andreae~\cite{andreae} established that in general for any graph $H$, at most $|E(H)|$ cops suffice to capture the robber on any graph which does not contain $H$ as a minor.  Gaven\v{c}iak et al.~\cite{gavenciak} studied the cop number of various families of intersection graphs and showed that the cop number for the class of string graphs is at most 15. We improve this result by giving a winning strategy using 13 cops for any string graph.

Several variations of \CR have been studied, and they vary mainly depending on the capabilities of the cops and the robber. Some of these variations are shown to have correspondence with various width measures of graphs like treewidth~\cite{seymour}, pathwidth~\cite{parsons1}, tree-depth~\cite{depth}, hypertree-width~\cite{adler}, cycle-rank\cite{depth}, and directed tree-width~\cite{dtwidth}. Moreover, Abraham et al.~\cite{cop-decs} defined  \textit{cop-decomposition}, which is based on the cop strategy in \CR game on minor-free graphs provided by Andreae~\cite{andreae}, and showed that it has significant algorithmic applications in theory.

Gromovikov et al.~\cite{active} studied a variation of the game, called \ACR, where the cops, as well as the robber, are forced to move to an adjacent vertex on their respective turns. We say that a cop (or robber) is \textit{active} if it has to move to an adjacent vertex in its every turn. Similarly, we say that a cop (or robber) is \textit{flexible} if, in its turn, it can either move to an adjacent vertex or stay on the same vertex. In \ACR, the cops, as well as the robber, are active. The \textit{active cop number} of a graph $G$, denoted  $\mathsf{c_a}(G)$, is the minimum number of cops required to ensure capture in \ACR. Gromovikov et al.~\cite{active} studied this game on various graph classes and suggested determining the active cop number of planar graphs as an open question. We address this question and show that the active cop number for the class of planar graphs is at most four. We also consider a variation of this game where only the cops are forced to be active. Let $\mathsf{c_A}(G)$ be the minimum number of active cops required to ensure the capture of a flexible robber in $G$. Observe that, for any graph $G$, $\mathsf{c_a}(G) \leq \mathsf{c_A}(G)$. We rather show that for a planar graph $G$, $\mathsf{c_A}(G) \leq 4$

\subsection{Preliminaries}
Let $u$ be a vertex of graph $G$. We define the \textit{open neighbourhood} of $u$, denoted by $N(u)$, as $\{v:uv\in E(G)\}$. We define the \textit{closed neighbourhood} of $u$, denoted by $N[u]$, as $N(u)\cup \{u\}$. 
For a subgraph $H$ of $G$, define the \textit{closed neighbourhood} of $H$, denoted by $N[H]$, as $\bigcup_{v\in V(H)} N[v]$. We also define $G-H$ as the graph induced by the vertices that are in $G$ but not in $H$. For a vertex $x \in V(G)$, we define $G-x$ as the graph induced by vertices in $V(G) \setminus \{x\}$.

Consider two arbitrary vertices $u,v \in V(G)$. By $d(u,v)$, we denote the distance between vertices $u$ and $v$ in $G$.
Let $P$ be a path of $G$ with endpoints $u$ and $v$. We say that $P$ is a $u,v$-path. Path $P$ is said to be \textit{isometric} if $P$ is a shortest $u,v$-path. Moreover, path $P$ is said to be \textit{convex} if every $u,v$-path $Q\neq P$ is longer than $P$. We remark that not every pair of vertices is guaranteed to have a convex path between them.
In this article, we do a lot of index manipulations on path vertices. Therefore, whenever we mention an isometric $v_i,v_j$-path $P$, for $i<j$, then the path is of the form $v_i,v_{i+1}, \ldots, v_{j-1},v_j$.

Let $H$ be a subgraph of $G$. We say that some cops are \textit{guarding} $H$ if $\R$ cannot enter a vertex of $H$ without getting captured. Similarly, we define a subset $T\subseteq V(G)$ as the \textit{robber territory} if $\R$ cannot move to a vertex $v \notin T$, without getting captured in the next move. 

Let $T \subseteq V(G)$ be the robber territory. A $u_0,u_k$-path $P$ is \textit{isometric} \textit{relative} to $T$, if there is no shorter $u_0,u_k$-path containing at least one vertex of $T$. An isometric $u_0,u_k$-path $P$ relative to $T$ is a \textit{convex path} \textit{relative} to $T$, if there exists no vertex  $x \in T$ such that $d(u_0,x) = i-1$ and $x\in N(u_i)$.


$G$ is an \textit{intersection graph} if each vertex $v\in V(G)$ corresponds to a set $\psi(v)$, and $uv \in E(G)$ if and only if $\psi(u) \cap \psi(v) \neq \emptyset$. 
A string graph $G$ is an intersection graph of \textit{strings}, where each string $\psi(v)$ is a continuous image of the interval $[0,1]$ into $\mathbb{R}^2$.  
Given a string graph $G$, we can generate strings corresponding to each vertex of $V$ such that two strings intersect if and only if the corresponding two vertices are adjacent in G. These strings are said to be a \textit{representation} of graph $G$. It is a standard assumption that for any string graph $G$, we can get a representation where the strings are non self-intersecting. So, we assume we have a representation where the strings are non self-intersecting.

\subsection{Our Results and Techniques Used}
It is well established that the cop number of a graph is well related to the geometry of the graph. This relation was first established by Aigner and Fromme~\cite{aigner}, who proved that the cop number of a planar graph is at most three. To show this, they proved the following result, which we will also use in this paper.

\begin{proposition}[\cite{aigner}]\label{res:shortest}
 Let $P$ be an isometric $u_0,u_k$-path in $G$. Then one cop can guard $P$ after at most $k$ cop moves.
\end{proposition}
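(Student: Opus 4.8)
The plan is to have the cop shadow the robber along $P = u_0, u_1, \ldots, u_k$ by projecting $\R$'s position onto the path according to its distance from the endpoint $u_0$. For a robber position $r$, define the \emph{shadow} $s(r) = u_{\sigma(r)}$, where $\sigma(r) = \min\{d(u_0,r), k\}$. The cop will aim to occupy $s(\R)$ at the end of each of its moves once a short setup phase is complete; I claim this single invariant is both maintainable and sufficient to force capture whenever $\R$ steps onto $P$.

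First I would handle the setup. Place the cop on $u_0$ and let it walk along $P$ toward higher indices, one vertex per move. Writing $c_t$ and $\sigma_t$ for the cop's index and the shadow's index after $t$ rounds, the key observation is that a single robber move changes $d(u_0, \R)$ by at most $1$, hence $|\sigma_{t+1}-\sigma_t|\le 1$, while the cop increases $c_t$ by exactly $1$. Consequently the gap $\sigma_t - c_t$ is non-increasing, and since $\sigma_t \le k$ while $c_t$ grows until it reaches $k$, the cop attains $c_t \ge \sigma_t$ within at most $k$ moves; from that point it can match the shadow exactly, because the shadow now moves by at most one step per round. This is where the stated bound of $k$ cop moves comes from.

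The heart of the argument is the guarding step, and this is where isometry of $P$ is essential. Suppose the cop sits on $u_{\sigma(r)}$ with $\R$ at $r \notin V(P)$, and $\R$ moves to some $u_i \in V(P)$. Because $P$ is isometric we have $d(u_0,u_i)=i$. If $d(u_0,r)\le k$, then $\sigma(r)=d(u_0,r)$ and, since the step to $u_i$ changes the distance from $u_0$ by at most $1$, we get $|i-\sigma(r)|\le 1$; the cop, being on $u_{\sigma(r)}$, can therefore move to $u_i$ and capture. If instead $d(u_0,r) > k$, then $\sigma(r)=k$ and $i = d(u_0,u_i) \ge d(u_0,r)-1 \ge k$, forcing $i = k$, so the cop already occupies $u_i$. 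Either way $\R$ is captured upon entering $P$, so $P$ is guarded.

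I expect the main obstacle to be making the setup phase fully rigorous: one must verify that the cop can reach the moving shadow within $k$ moves even though the shadow itself shifts while the cop approaches, which is precisely the monotonicity-of-the-gap argument sketched above. The guarding case analysis, by contrast, reduces cleanly to the defining property $d(u_0,u_i)=i$ of an isometric path, together with the fact that one move alters $d(u_0,\cdot)$ by at most one.
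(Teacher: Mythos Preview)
The paper does not supply its own proof of this proposition; it is quoted verbatim as a result of Aigner and Fromme and used as a black box throughout. Your argument is exactly the classical shadow/projection proof from that source: define $\sigma(r)=\min\{d(u_0,r),k\}$, chase the shadow along $P$, and use isometry of $P$ (namely $d(u_0,u_i)=i$) together with the $1$-Lipschitz property of $d(u_0,\cdot)$ to capture whenever $\R$ steps onto $P$. Both the setup bound of $k$ moves and the guarding case analysis are correct as written, so there is nothing to compare---your proof \emph{is} the intended one.
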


A similar idea was used by Beveridge et al.~\cite{udg}, who showed that three cops can prevent the robber from crossing an isometric path in any unit disk graph, and using this proved that nine cops have a winning strategy for unit disk graphs. Later, Gaven\v{c}iak et al.~\cite{gavenciak}  proved that the cop number of string graphs is at most 15. To establish this, they proved the following result, which we will also use in this paper.
\begin{proposition}[\cite{gavenciak}]\label{lem:shortr}
Let $u$ and $v$ be two distinct vertices of $G$ and $P$ be an isometric $u,v$-path relative to the robber territory $T\subseteq V(G)$. Then five cops can guard $N[P]$ after at most $k$ cop moves.
\end{proposition}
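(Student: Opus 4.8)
The plan is to exploit the planar structure of a string representation together with the shadow strategy that underlies \Cref{res:shortest}. Write the path as $P = u_0, u_1, \ldots, u_k$ with $u = u_0$ and $v = u_k$. Since consecutive vertices of $P$ are adjacent, the union $S = \bigcup_{i=0}^{k} \psi(u_i)$ of their strings is an arcwise connected subset of the plane, and a vertex $w$ lies in $N[P]$ precisely when $\psi(w)$ meets $S$. The first cop is assigned to guard $P$ itself: by \Cref{res:shortest} it can reach, within at most $k$ moves, and then maintain the shadow position $u_j$ of $\R$ on $P$, where $j$ is determined by the distance of $\R$ from $u_0$ measured relative to $T$. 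The remaining four cops are tasked with preventing $\R$ from ever stepping onto a vertex of $N[P] \setminus P$.

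First I would make the notion of a shadow robust relative to $T$. Because $P$ is isometric relative to the robber territory $T$, every vertex of $T$ has a well-defined distance to $u_0$ along paths meeting $T$, and I would use this to assign to the current robber vertex $r$ a projection $\pi(r) \in \{u_0, \ldots, u_k\}$. The key quantitative claim is that in one robber move $\pi(r)$ can shift by at most one index; this is exactly what lets a constant number of cops keep pace with the projection, and it follows from the relative-isometric hypothesis, which forbids shortcuts through $T$ that would let $\R$ jump its projection.

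Next comes the geometric heart of the argument. Viewing $S$ as a connected curve, its thin neighbourhood in the plane is bounded by closed curves that separate the robber's region from $S$. To enter $N[P]$, the string of $\R$'s target vertex $w$ must cross from the robber's side of this boundary onto $S$, and by planarity such a crossing is forced to occur near the projection $\pi(r)$. I would station the four auxiliary cops at path vertices flanking $\pi(r)$ so as to cover the two sides of the curve $S$ and its two endpoints $u_0$ and $u_k$; whenever $\R$ attempts to occupy a neighbour $w$ of some $u_i$, one of these cops is already on $u_i$ (or on an appropriate flanking vertex) and captures $\R$ at $w$ on the next move. Maintaining this configuration reduces, via the slow movement of $\pi$, to each cop following an isometric sub-path, which \Cref{res:shortest} again permits.

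The main obstacle I anticipate is precisely the topological freedom of strings: unlike disks, a single string may wind arbitrarily and touch $S$ at many points strung along the whole length of $P$, so a robber could in principle threaten entry far from its projection. The crux of the proof is therefore the planar separation argument showing that, relative to $T$, any such entry is nonetheless channelled through a bounded window around $\pi(r)$ that four cops can seal; pinning down this window, and verifying that the two endpoints $u_0$ and $u_k$ do not open extra escape routes, is where the bulk of the work lies and where the count of five cops is finally justified.
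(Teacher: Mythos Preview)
Your proposal heads in the wrong direction. The proposition is stated for an arbitrary graph $G$, not for a string graph, and the proof (from Gaven\v{c}iak et al., and made explicit in this paper in the proof of \Cref{L:extendPath}) is entirely combinatorial: one cop plays the sheriff and guards $P$ via \Cref{res:shortest}, and the four deputies simply sit at $u_{j-2},u_{j-1},u_{j+1},u_{j+2}$ whenever the sheriff is at $u_j$. The only fact one needs is that if $x\in T\cap N[P]$ lies in $D_j$ (that is, $d(u_0,x)=j$), then $x$ can be adjacent only to $u_{j-1},u_j,u_{j+1}$; this is an immediate consequence of $P$ being isometric relative to $T$ (adjacency to $u_i$ with $i\le j-2$ would shorten $d(u_0,x)$, and adjacency to $u_i$ with $i\ge j+2$ would produce a shorter $u_0,u_k$-path through $T$). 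Since the robber's distance level changes by at most one per move, the five cops at $u_{j-2},\ldots,u_{j+2}$ cover every possible entry point into $N[P]$ on the next step. No geometry, no strings, no planar separation.

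Your plan instead tries to argue geometrically that ``by planarity such a crossing is forced to occur near the projection $\pi(r)$'', and you yourself flag that a string may wind arbitrarily and touch $S$ far from the projection. That worry is real for your approach and you never resolve it; but it is also irrelevant, because the index restriction on which $u_i$ can neighbour $x$ comes for free from the isometric hypothesis, not from any topological argument. In short, you have identified the correct invariant (cops flanking the shadow) but missed the one-line combinatorial reason it suffices, and substituted an unnecessary and incomplete topological argument in its place.
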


At a very high level, the idea of the above strategies is the following. The cops play the game assuming a fixed representation of the graph. The cop player employs three teams of cops, where each team can prevent the robber from crossing an isometric path. The cops begin by using one team to guard an isometric path, say $P_1$. Next, the cop player finds another isometric path, say $P_2$, such that the endpoints of $P_1$ and $P_2$ are the same, and guard it using the second team. Now, $\R$ cannot cross the paths $P_1$ and $P_2$, and hence is restricted to one of the faces formed by the boundary $P_1 \cup P_2$ in the embedding. Now, we can delete the part of the graph not accessible to $\R$. In the remaining graph, the cop player finds another isometric path, say $P_3$, such that the endpoints of $P_3$ are the same as the endpoints of $P_1$ and $P_2$, and guard $P_3$ using the third team of the cops. This further restricts $\R$ to either in a face formed by the boundary $P_1\cup P_3$ or in a face formed by the boundary $P_2 \cup P_3$. In either case, we can free one team of cops and keep repeating this process, and in each iteration, the robber territory is strictly reduced. Since the robber territory is initially the graph $G$, which is finite, these three teams eventually capture the robber.

Our main observation is that if an isometric path $P$ is the \textit{convex path}, then in some cases, we can employ a smaller number of cops to prevent the robber from crossing $P$. More specifically, we have the following lemma, which we prove in Section~\ref{sec:4cops}.

\begin{restatable}{lemma}{ConvexString}\label{lem:uniqueIsometric}
Let $u_0$ and $u_k$ be two distinct vertices of $G$ and   $P$ be a convex $u_0,u_k$-path relative to the robber territory $T \subseteq V(G)$. Then four cops can guard $N[P]$, after at most $k$ cop moves.
\end{restatable}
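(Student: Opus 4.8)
The plan is to mimic the five-cop guarding strategy behind \Cref{lem:shortr}, but to exploit convexity to save one cop. Recall the structure of the known proof: one cop follows the "shadow" of the robber along the path $P$ itself (via \Cref{res:shortest}, this takes at most $k$ moves to set up), and the remaining cops patrol the neighbourhood $N[P]$ so that any attempt by $\R$ to step onto a vertex of $N[P]$ — and then cross to the other side — is met by capture. The reason five cops are needed in the general isometric case is that a vertex $x \in N(u_i)$ could also be adjacent to $u_{i-1}$ or $u_{i+1}$ (or could sit at the "same level" as $u_i$, i.e. $d(u_0,x)=i$), so a cop shadowing along $P$ cannot by itself intercept a robber that uses such a shortcut vertex; extra cops are deployed to cover these off-by-one ambiguities on both sides of the path. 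My first step is therefore to recall this partition of $N[P]$ into "levels" by distance from $u_0$, and to set up one cop $C_0$ shadowing the robber's projection onto $P$.

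The key idea is that convexity relative to $T$ eliminates exactly the ambiguity that forces the fifth cop. By definition, $P$ being convex relative to $T$ means there is no vertex $x \in T$ with $d(u_0,x)=i-1$ that is adjacent to $u_i$; combined with $P$ being isometric relative to $T$, this pins down the distance levels of the relevant neighbours of $P$ much more rigidly. Concretely, I would prove that for any $x \in N[P] \cap T$, the index $i$ with $x \in N[u_i]$ is essentially forced: a robber-reachable neighbour of $u_i$ cannot "bridge" to $u_{i-1}$ via a shortcut, because such a bridge would contradict convexity (it would create a vertex at level $i-1$ adjacent to $u_i$, or a shorter path through $T$). This collapses the two-sided uncertainty to a one-sided one, so that a single shadowing cop together with three patrol cops — rather than four — suffices to dominate every crossing attempt.

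The concrete strategy I would then describe runs as follows. Position $C_0$ on the path vertex $u_j$ where $j$ is the index minimizing $d(\R, u_j)$ (the standard shadow), and position the three remaining cops to cover the three "types" of neighbour vertices that $\R$ could step onto near $u_j$: those at level $j-1$, level $j$, and level $j+1$ relative to $u_0$. I would argue, using the convexity constraint, that whenever $\R$ sits on some $x \in N[P]$, the cop on $u_j$ and these three patrol cops collectively dominate $N[x] \cap N[P]$, so $\R$ can never step from $x$ across $P$ to a vertex on the far side without moving into a cop's closed neighbourhood. Maintaining these invariants as $\R$ moves is a matter of checking that each patrol cop can track the change in $j$ by one step per turn, which is where the "at most $k$ moves" setup bound comes from via \Cref{res:shortest}.

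The main obstacle I anticipate is the careful case analysis establishing that three patrol cops genuinely suffice — that is, proving the level of any robber-accessible vertex in $N[P]$ is determined tightly enough by convexity that one of the three tracked levels always contains (or dominates) it. In the isometric-only setting a neighbour $x$ of $u_i$ can legitimately live at level $i-1$, $i$, or $i+1$ \emph{and} be adjacent to more than one path vertex, and it is precisely the simultaneous multi-adjacency that the fifth cop handles; the crux of my proof is to show convexity relative to $T$ forbids the bad configurations, so that each accessible $x$ is "assigned" unambiguously to one patrol cop. I expect the delicate point to be the boundary behaviour near the endpoints $u_0$ and $u_k$ and the handling of vertices at the same level as their neighbouring path vertex, so I would treat those as explicit subcases and verify the domination invariant survives each robber move there.
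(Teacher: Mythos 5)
Your proposal takes essentially the same route as the paper's proof, and the key insight is the right one: convexity relative to $T$ forbids any vertex of $T$ at distance $i-1$ from $u_0$ from being adjacent to $u_i$, so a robber-accessible vertex $x\in N[P]$ at level $j$ (i.e.\ $d(u_0,x)=j$) can be adjacent only to $u_{j-1}$ and $u_j$ among the vertices of $P$ (distance rules out $u_0,\dots,u_{j-2}$, and isometry relative to $T$ rules out $u_{j+2},\dots,u_k$). The paper's strategy is likewise one shadowing cop (the ``sheriff'') plus three cops moving in lockstep along $P$ itself: when the sheriff is at $u_j$, the deputies stand at $u_{j-2}$, $u_{j-1}$, $u_{j+1}$.

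Three details in your plan would need repair when written out. First, the coverage is joint, not an ``unambiguous assignment'' of each accessible vertex to one patrol cop: an accessible vertex at level $j'$ may attach to either $u_{j'-1}$ or $u_{j'}$ (or both), so no single cop dominates a level; what makes four cops suffice is that for each of the three levels $j-1$, $j$, $j+1$ that $\R$ can reach in one move, \emph{both} of its possible attachment vertices lie inside the occupied window $\{u_{j-2},u_{j-1},u_j,u_{j+1}\}$. For the same reason, your claim that a neighbour of $u_i$ ``cannot bridge to $u_{i-1}$'' is false if read as forbidding adjacency to $u_{i-1}$ --- a level-$j$ vertex adjacent to both $u_{j-1}$ and $u_j$ is perfectly legal; convexity removes only the adjacency to $u_{j+1}$, exactly as your parenthetical says. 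Second, the shadow must be the level-based one of Proposition~\ref{res:shortest} (sheriff at $u_j$ precisely when $d(u_0,\R)=j$, capped at $k$), not ``the index $j$ minimizing $d(\R,u_j)$'': the nearest-vertex index can jump by an arbitrary amount when the robber moves one step, so no cop can track it. Third, guarding $N[P]$ in the paper's sense means $\R$ is captured the moment it \emph{enters} $N[P]$, not merely that it cannot cross to the far side; your later invariant (letting $\R$ sit on $x\in N[P]$ while the cops dominate $N[x]\cap N[P]$) proves the weaker statement. The window strategy gives the stronger one immediately, since any vertex of $N[P]$ the robber enters is adjacent to an occupied vertex of the window.
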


Using Lemma~\ref{lem:uniqueIsometric}, we prove that four cops can prevent $\R$ from crossing a convex path in a string graph representation. Then, using some novel techniques, we give a strategy such that whenever two teams of cops are employed to guard two isometric paths, one of the teams is guarding a convex path. This directly gives a cop winning strategy using 14 cops for string graphs. We further use some techniques to improve this bound to 13 cops. We have the following result, which we prove in Section~\ref{sec:string}.

\begin{restatable}{theorem}{ThmString}\label{th:string}
If $G$ is a string graph, then $\mathsf{c}(G)\leq 13$.
\end{restatable}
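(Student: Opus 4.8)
The plan is to run the three-team isometric-path framework of Gaven\v{c}iak et al.\ (the strategy sketched above using Proposition~\ref{lem:shortr}) inside a fixed string representation of $G$, but to systematically replace one of the guarded isometric paths by a \emph{convex} path so that the team responsible for it costs only four cops (Lemma~\ref{lem:uniqueIsometric}) instead of five. Throughout, the robber $\R$ is confined to a region whose boundary consists of two guarded $u_0,u_k$-paths sharing their endpoints, and I would maintain as an invariant that at least one of these two boundary paths is convex relative to the current robber territory $T$. The two boundary teams then cost $4+5=9$ cops together, and a third team is used to subdivide $T$ by guarding a new $u_0,u_k$-path inside the region; this yields the direct bound of $4+5+5=14$ cops at the moment all three teams are simultaneously active.

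First I would make the confinement step precise: guarding $N[P]$ for a path $P$ (via Proposition~\ref{lem:shortr} or Lemma~\ref{lem:uniqueIsometric}) prevents $\R$ from crossing $P$ in the representation, so two guarded paths with common endpoints cut the plane and trap $\R$ in one bounded face. Once $\R$ commits to a sub-region, one boundary team is freed and must be reused to restore the invariant on the new, strictly smaller territory $T'$, which is exactly the configuration that keeps the cop count bounded across iterations.

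The heart of the argument --- and the step I expect to be the main obstacle --- is producing the convex path on demand, since the excerpt already warns that not every pair of vertices admits a convex path. Here I would exploit the geometry of the representation rather than the abstract graph: starting from a shortest $u_0,u_k$-path relative to $T'$, deform it to the extreme (``innermost'') side of $\R$'s region, i.e.\ the leftmost shortest path along the boundary of $T'$. Any shortcut witnessing non-convexity would have to lie on the $T'$-side of this extreme path, contradicting its extremality; hence the extreme shortest path is convex relative to $T'$ and can be guarded with four cops. Verifying that such a deformation is always available in a string representation, and that the new team interacts correctly with the already-guarded boundary as $\R$ is squeezed, is the delicate point; the ``one of the two boundary paths is convex'' invariant is precisely what makes this re-selection possible at every subdivision.

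Finally, to push $14$ down to $13$, I would look for a cop shared between teams. All three guarded paths share the two endpoints $u_0$ and $u_k$, and near a shared endpoint the closed neighbourhoods that different teams must cover overlap; I would argue that a single cop stationed near such an endpoint can discharge the guarding duty of two teams at once, making one cop of the subdividing team redundant and giving $13$. Termination is then immediate: each subdivision strictly shrinks the robber territory $T$, and since $G$ is finite the region eventually becomes empty and $\R$ is captured. The two places demanding genuine care are the representation-based convex-path construction and the verification that the shared-endpoint cop never faces conflicting obligations.
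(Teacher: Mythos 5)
Your high-level framework---two boundary teams with at least one convex path (costing $4+5=9$ cops) plus a subdividing team, with convexity inherited when the territory shrinks---is exactly the paper's, but the step where you manufacture the convex path has a genuine gap. You want the \emph{new} subdividing path to be convex by taking the ``innermost'' shortest path and arguing that any convexity violation contradicts extremality. A violation is a vertex $x$ in the territory with $d(u_0,x)=i-1$ and $x\in N(u_i)$; this does give an equally short $u_0,u_k$-path through $x$, but the $u_0,x$-portion realizing $d(u_0,x)$ is unconstrained---it may cross to the other side of your path or run through $N[P_3]$---so it need not be ``more extreme'' than your path, and the contradiction does not follow. The paper sidesteps this by reversing the roles: the newly guarded path is only required to be \emph{isometric}, and it is the \emph{already guarded} path that becomes convex \emph{relative to the smaller region} trapped between the old and new curves. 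This is the extended-curve construction of Lemma~\ref{L:extendPath}: take the least index $i$ at which convexity of the old path fails, choose a witness whose string meets $\psi(u_i)$ at an intersection point geometrically closest along $\psi(u_i)$ to the old curve, and route the new curve through that witness while reusing the old curve's suffix; minimality of $i$ together with the closest-intersection choice is what excludes witnesses from the in-between region, and making this rigorous requires the curve machinery (monotone curves, Lemma~\ref{subpath}) rather than an abstract ``leftmost path'' argument.

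Your $14\to 13$ step also fails as stated. Cops guarding a path are not stationed near an endpoint: the sheriff mirrors the robber's distance from $u_0$ and the deputies sit adjacent to the sheriff, so their positions are dictated by the robber, while the closed neighbourhoods of two internally disjoint paths overlap only near the endpoints---precisely where the cops generally are not. The paper's actual saving exploits the extension construction: the extended path coincides with the old path on the entire suffix $u_i,\dots,u_k$, and the old path's prefix is convex relative to the in-between region, so the fifth cop (the ``special deputy'' at index $j+2$) is needed only on the shared suffix and can serve \emph{both} teams at once; hence both paths are guarded by $9$ cops, which together with the $4$ cops on the convex boundary path gives the maximum of $13$. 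Finally, you never treat the degenerate case in which no third $u_0,u_k$-path exists inside the region, so no subdividing path can be chosen at all; the paper needs a separate safe state for this (one cop on a cut vertex plus five cops on a top-bottom isometric path, state~3 of Lemma~\ref{lem:reduce}) to keep the territory strictly shrinking. Without these three ingredients---the extended-curve mechanism, the shared special deputy, and the cut-vertex state---neither the count of $13$ nor the termination argument goes through.
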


Petr et al.~\cite{AlgoCR} gave an algorithm that, given a graph $G$, can decide in $\mathcal{O}(kn^{k+2})$ time if $\mathsf{c}(G) \leq k$. Therefore, for any graph family $\mathcal{F}$, if $\mathsf{c}(\mathcal{F}) \leq \ell$, where $\ell \in \mathbb{N}$, then for any graph $G \in \mathcal{F}$, $\mathsf{c}(G)$ can be computed in $\mathcal{O}(n^{\ell+2})$ time. Thus, we have the following corollary.

\begin{restatable}{corollary}{CorroRunning}
If $G$ is a string graph, then $\mathsf{c}(G)$ can be computed in $\mathcal{O}(n^{15})$ time.
\end{restatable}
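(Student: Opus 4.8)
The plan is to combine the structural bound from \cref{th:string} with the decision algorithm of Petr et al.~\cite{AlgoCR} in the direct way. First I would invoke \cref{th:string} to conclude that every string graph $G$ satisfies $\mathsf{c}(G) \leq 13$. This caps the search range for the exact cop number at the constant $13$, which is precisely the ingredient that turns the parameterized decision procedure into a polynomial-time exact computation.

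Next, recall that the algorithm of Petr et al.~\cite{AlgoCR} decides whether $\mathsf{c}(G) \leq k$ in $\mathcal{O}(k n^{k+2})$ time. To recover the exact value of $\mathsf{c}(G)$, I would run this decision procedure for each $k \in \{1, 2, \ldots, 13\}$ and report the smallest $k$ for which the answer is affirmative. Since \cref{th:string} guarantees $\mathsf{c}(G) \leq 13$, such a $k$ is certain to exist, and by minimality it equals $\mathsf{c}(G)$.

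For the running time, the total cost is $\sum_{k=1}^{13} \mathcal{O}(k n^{k+2})$, a sum of a constant number of terms dominated by the $k = 13$ summand, namely $\mathcal{O}(13 \cdot n^{15}) = \mathcal{O}(n^{15})$. Hence $\mathsf{c}(G)$ can be computed in $\mathcal{O}(n^{15})$ time, as claimed. This matches the general principle already noted in the discussion preceding the statement: for a family $\mathcal{F}$ with $\mathsf{c}(\mathcal{F}) \leq \ell$, the cop number of any $G \in \mathcal{F}$ is computable in $\mathcal{O}(n^{\ell + 2})$ time, instantiated here with $\ell = 13$.

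I do not expect any genuine obstacle: the corollary is immediate once \cref{th:string} bounds the cop number by a constant and the cited decision algorithm is plugged in. The only point meriting a remark is that computing the exact value, rather than merely deciding a fixed threshold, requires iterating over all candidate values of $k$; but since there are only a constant number (at most $13$) of such values, this incurs no change in the asymptotic running time.
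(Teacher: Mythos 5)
Your proposal is correct and is essentially the paper's own argument: the corollary is stated as an immediate consequence of the preceding discussion (the $\mathcal{O}(kn^{k+2})$ decision algorithm of Petr et al.\ combined with the bound $\mathsf{c}(G)\leq 13$ from Theorem~\ref{th:string}), and your iteration over $k \in \{1,\ldots,13\}$ simply makes that reasoning explicit. No gaps; the constant number of calls to the decision procedure indeed leaves the asymptotic bound at $\mathcal{O}(n^{15})$.
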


Aigner and Fromme~\cite{aigner} also showed that for a graph $G$ with girth\footnote{The \textit{girth} of a graph $G$ is the length of a shortest cycle contained in $G$.} at least five and minimum degree $\delta(G)$, $\mathsf{c}(G) \geq \delta(G)$. Inspired by this, Gaven\v{c}iak et al.~\cite{gavenciak} established the following interesting relation between the cop number of a graph $G$, its degeneracy, and hence its chromatic number.

\begin{proposition}[\cite{gavenciak}]\label{P:chromatic}
Let $\mathcal{F}$ be a hereditary class of graphs such that $\mathsf{c}(\mathcal{F}) \leq k$, for $k \in \mathbb{N}$. Then, every graph $G \in \mathcal{F}$ with girth at least five is $k$-degenerate and therefore, $k+1$-colorable.
\end{proposition}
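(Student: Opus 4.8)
The plan is to derive $k$-degeneracy from the Aigner--Fromme lower bound $\mathsf{c}(G) \geq \delta(G)$ for graphs of girth at least five, and then to obtain $(k+1)$-colourability by the standard greedy argument for degenerate graphs.

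Recall that $G$ is \emph{$k$-degenerate} exactly when every subgraph of $G$ contains a vertex of degree at most $k$. I would first reduce this to a statement about induced subgraphs: if $H$ is any subgraph, then $H$ and the induced subgraph $G[V(H)]$ share their vertex set while $H$ has no more edges, so $\deg_H(v) \leq \deg_{G[V(H)]}(v)$ for every $v$; hence a vertex of degree at most $k$ in $G[V(H)]$ is also one in $H$. Thus it suffices to show that every induced subgraph of $G$ has a vertex of degree at most $k$.

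So fix $S \subseteq V(G)$ and let $C$ be any connected component of the induced subgraph $G[S]$. Then $C$ is connected, it is itself an induced subgraph of $G$ and so lies in $\mathcal{F}$ by hereditariness, and its girth is at least five, since every cycle of $C$ is a cycle of $G$ and therefore has length at least five. Consequently $\mathsf{c}(C) \leq k$, and the Aigner--Fromme inequality applied to the graph $C$ of girth at least five gives $\delta(C) \leq \mathsf{c}(C) \leq k$. A vertex realizing $\delta(C)$ has the same degree in $C$ as in $G[S]$, so $G[S]$ has a vertex of degree at most $k$; as $S$ was arbitrary, $G$ is $k$-degenerate.

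To finish I would invoke the usual greedy colouring of a $k$-degenerate graph: repeatedly peel off a vertex of minimum degree to obtain an elimination ordering, then colour in the reverse order, so that each vertex sees at most $k$ previously coloured neighbours and a palette of $k+1$ colours always suffices. The genuine content of the argument is the cited Aigner--Fromme bound; the only point one must handle carefully is that $\mathcal{F}$ is assumed merely hereditary, that is, closed under induced subgraphs, which is enough precisely because degeneracy is controlled by induced subgraphs and because both membership in $\mathcal{F}$ and the girth-at-least-five hypothesis are inherited when passing to induced subgraphs and to connected components.
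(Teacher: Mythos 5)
Your proof is correct, and it is precisely the argument the paper intends: the paper states this proposition as a citation to Gaven\v{c}iak et al.\ without reproducing a proof, but introduces it immediately after the Aigner--Fromme bound $\mathsf{c}(G)\geq\delta(G)$ for girth-at-least-five graphs, which is exactly the engine of your argument. Your careful handling of the reduction to induced subgraphs, the passage to connected components (needed since cop number is defined componentwise), and the inheritance of both membership in $\mathcal{F}$ and the girth condition fills in the routine details correctly.
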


Using Proposition~\ref{P:chromatic}, they established that every string graph with girth at least five is 16-colorable. Although the results of Fox and Pach~\cite{FoxPach} imply that the chromatic number of girth five string graphs is bounded, their results do not mention an explicit numerical bound. Moreover, it is known that the chromatic number of string graphs with girth four is unbounded~\cite{stringUnbounded}. We also note here that the chromatic number of girth (at least) five 1-string graphs is at most six~\cite{KostochkaNesetril}, where 1-string graphs are the graphs with a string representation where two strings intersect at most once. We have the following corollary on the chromatic number of string graphs using Proposition~\ref{P:chromatic} and Theorem~\ref{th:string}.

\begin{restatable}{corollary}{CorroString}
is a string graph with girth at least five, then
\end{restatable}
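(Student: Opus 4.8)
The plan is to derive the final corollary directly from the two results already established in the excerpt, namely Theorem~\ref{th:string} and Proposition~\ref{P:chromatic}. The statement to prove is that if $G$ is a string graph with girth at least five, then its chromatic number is at most $14$. First I would observe that the class of string graphs is hereditary: any induced subgraph of a string graph is again a string graph, since deleting a vertex simply amounts to removing the corresponding string from the representation, and the intersection pattern among the remaining strings is unchanged. This is the hypothesis needed to invoke Proposition~\ref{P:chromatic}.

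Next I would apply Theorem~\ref{th:string}, which gives $\mathsf{c}(\mathcal{F}) \leq 13$ for the family $\mathcal{F}$ of all string graphs. With $k = 13$, Proposition~\ref{P:chromatic} immediately yields that every string graph $G$ with girth at least five is $13$-degenerate, and therefore $(13+1)$-colorable, i.e.\ $14$-colorable. The two cited results thus chain together with no intermediate construction required. The only subtlety worth flagging explicitly is the verification of hereditariness, which is the hypothesis of Proposition~\ref{P:chromatic}; I would state it in one sentence as above.

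I do not expect any genuine obstacle here, as this corollary is a one-line consequence of the main theorem together with the previously stated degeneracy--coloring relation. The only point requiring minor care is bookkeeping on the constant: one must feed $k=13$ (and not $14$) into Proposition~\ref{P:chromatic} so that the resulting bound is $k+1 = 14$ colors. I would write the proof compactly as follows.

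\begin{proof}
The class of string graphs is hereditary: if $G$ is a string graph and $H$ is an induced subgraph of $G$, then deleting from a string representation of $G$ the strings corresponding to the vertices not in $H$ yields a string representation of $H$. By Theorem~\ref{th:string}, we have $\mathsf{c}(G) \leq 13$ for every string graph $G$, so the family $\mathcal{F}$ of string graphs satisfies $\mathsf{c}(\mathcal{F}) \leq 13$. Applying Proposition~\ref{P:chromatic} with $k = 13$, every string graph with girth at least five is $13$-degenerate and hence $14$-colorable.
\end{proof}
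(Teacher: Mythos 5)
Your proof is correct and matches the paper's approach exactly: the paper obtains this corollary by combining Proposition~\ref{P:chromatic} (with the hereditariness of string graphs) and Theorem~\ref{th:string}, precisely as you do. Your explicit verification that string graphs form a hereditary class is a nice touch that the paper leaves implicit.
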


Let \textit{2-BOX} be the family of intersection graphs of axis-parallel rectangles in $\mathbb{R}^2$. It is known that $2 \leq \mathsf{c}$\textit{(2-BOX)}$\leq 15$~\cite{gavenciak}. We improve this result in the following theorem.
\begin{restatable}{theorem}{ThmRectangle}\label{th:rectangle}
Let \textit{2-BOX} be the family of rectangle intersection graphs. Then $3 \leq \mathsf{c}$\textit{(2-BOX)}$\leq 13$
\end{restatable}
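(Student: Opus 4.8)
The plan is to prove the two inequalities separately; the upper bound is essentially inherited from our main result, while the lower bound requires an explicit geometric construction.

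For the upper bound, I would first recall that every rectangle intersection graph is a string graph. Indeed, given an axis-parallel rectangle representation, replacing each rectangle by a curve filling its (closed) region produces a family of curves in which two curves meet precisely when the two closed rectangles intersect; thus the same graph has a string representation. (This is exactly the containment used implicitly by Gaven\v{c}iak et al.~\cite{gavenciak} to derive their bound of $15$ for \textit{2-BOX}.) Consequently \textit{2-BOX} is a subclass of string graphs, and Theorem~\ref{th:string} gives $\mathsf{c}(\textit{2-BOX}) \le 13$ immediately, with no further work.

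For the lower bound $\mathsf{c}(\textit{2-BOX}) \ge 3$, my plan is to exhibit a single graph $G \in \textit{2-BOX}$ with $\mathsf{c}(G) \ge 3$, rather than analysing a pursuit directly. To this end I would invoke the bound of Aigner and Fromme~\cite{aigner} stating that any graph $H$ of girth at least five satisfies $\mathsf{c}(H) \ge \delta(H)$. It therefore suffices to construct a rectangle intersection graph $G$ of girth at least five and minimum degree at least three, since then $\mathsf{c}(\textit{2-BOX}) \ge \mathsf{c}(G) \ge \delta(G) \ge 3$. Concretely, I would design an axis-parallel rectangle arrangement whose intersection graph is triangle-free and $C_4$-free while every rectangle meets at least three pairwise non-adjacent others. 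In the language of the representation, each vertex $v$ receives an $x$-interval $X_v$ and a $y$-interval $Y_v$ with $uv \in E(G)$ iff $X_u \cap X_v \neq \emptyset$ and $Y_u \cap Y_v \neq \emptyset$, so the requirements become: no three rectangles pairwise overlap in both coordinates (triangle-freeness), no two vertices share two common neighbours ($C_4$-freeness), and each vertex has degree at least three. I would present such an arrangement explicitly (via a figure) and verify the three properties by direct inspection of the overlap pattern.

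The main obstacle is precisely this construction. The Moore bound forces any graph of girth five and minimum degree three to have at least ten vertices, and the demands of large minimum degree and large girth pull against the rigidity of boxicity-two representations: forcing each rectangle to meet three others tends to create triangles or four-cycles, so the rectangles must be positioned so that the three (or more) neighbours of each vertex are themselves pairwise disjoint in at least one coordinate and never reconverge at a second common rectangle. Verifying that the chosen arrangement avoids all short cycles is the delicate part and is where the care of the proof is concentrated; once the girth-five, minimum-degree-three rectangle graph is in hand, the conclusion $\mathsf{c}(\textit{2-BOX}) \ge 3$ follows formally from the Aigner--Fromme inequality.
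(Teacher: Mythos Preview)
Your proposal is correct and follows essentially the same route as the paper. For the upper bound both arguments are identical, and for the lower bound the paper simply names a concrete instance of exactly the kind of graph you are describing: it exhibits a boxicity-$2$ representation of the dodecahedron (which is $3$-regular of girth five, hence has cop number at least three by the Aigner--Fromme bound you invoke) and refers to~\cite{aigner} for $\mathsf{c}=3$. So your abstract plan and the paper's explicit witness coincide; you may save yourself the ad~hoc construction by using the dodecahedron directly.
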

\begin{proof}
Since \textit{2-BOX} is a subclass of string graphs, the upper bound follows from Theorem~\ref{th:string}.
To prove the lower bound (i.e., $3 \leq \mathsf{c}$\textit{(2-BOX)}), we observe that the dodecahedron graph, having cop number three~\cite{aigner}, is a boxicity 2 graph. For completeness, we give a  rectangle intersection representation of the dodecahedron graph in Figure~\ref{fig:dode}.   
\end{proof}

\begin{figure}
    \centering
    \includegraphics[scale=0.25]{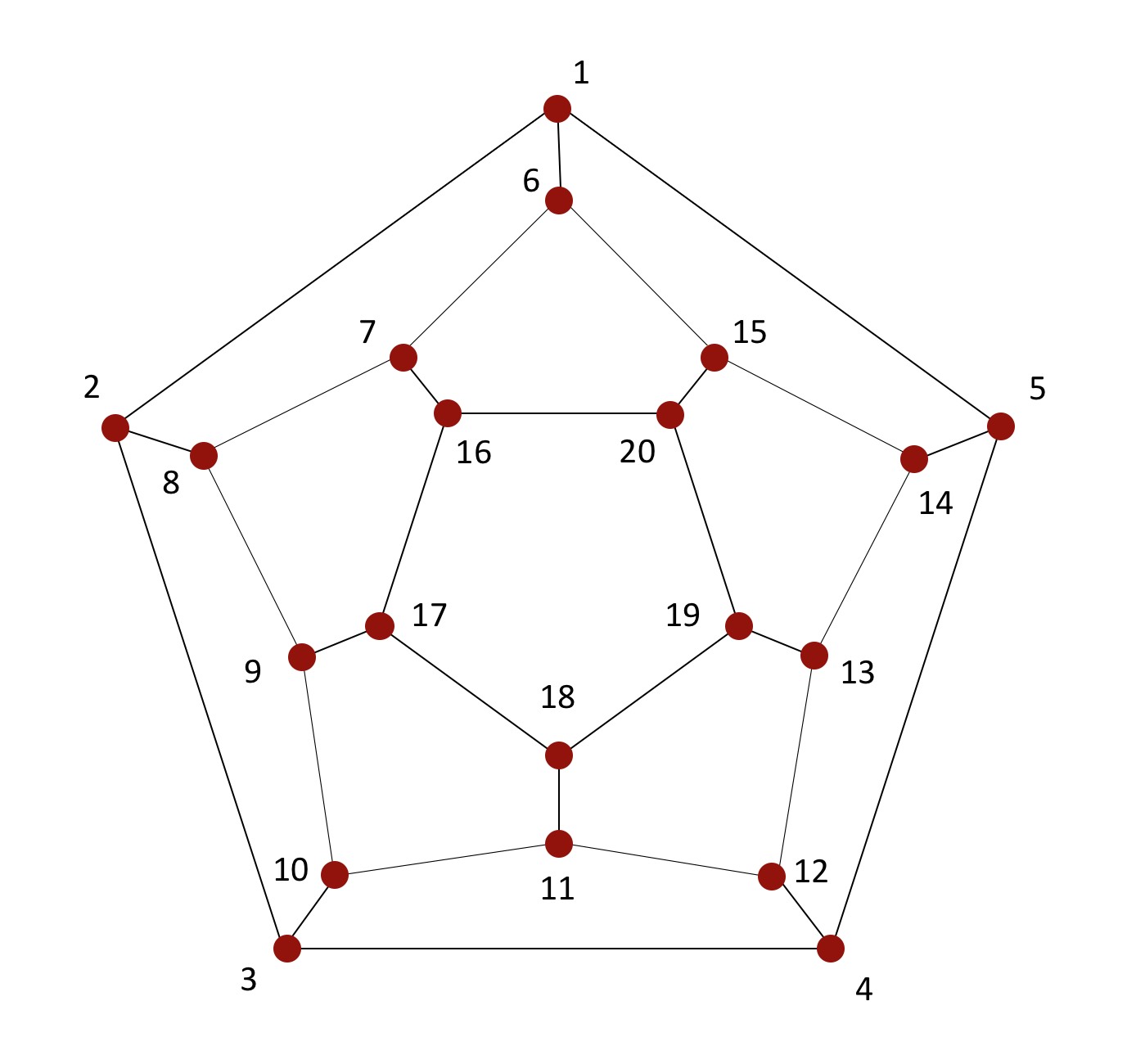}
    \includegraphics[scale =0.54]{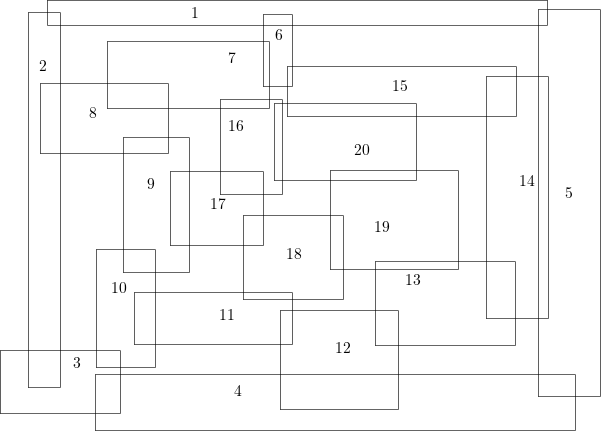}
    \caption{The dodecahedron and its boxicity 2 representation. Here each vertex $i$ corresponds to rectangle $i$.}
    \label{fig:dode}
\end{figure}
We further show that our technique can be used to attain better bounds for different variations of the game on other graph classes as well. In particular, we study \ACR on planar graphs.  It is known that $\mathsf{c_a}(G) \leq 2\cdot \mathsf{c}(G)$\cite{active}. Let $\mathcal{P}$ be the class of planar graphs, then trivially $\mathsf{c_a}(\mathcal{P}) \leq 6$. Gromovikov et al.~\cite{active} asked as open question what is the value of $\mathsf{c_a}(\mathcal{P})$. We answer this question partially by showing that $\mathsf{c_a}(\mathcal{P}) \leq 4$. To show this, we prove the following lemma in Section~\ref{sec:4cops}.

\begin{restatable}{lemma}{ConvexPlanar}\label{lem:uniquePlanar}
Let $v_0$ and $v_k$ be two distinct vertices of $G$ and $P$ be a convex $v_0,v_k$-path relative to the robber territory $T \subseteq V(G)$. Then, one active cop can guard $P$ against a flexible robber, after at most $k$ cop moves.
\end{restatable}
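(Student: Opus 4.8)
The plan is to adapt the standard shadow-guarding strategy of Aigner and Fromme (Proposition~\ref{res:shortest}) to a single \emph{active} cop, using convexity relative to $T$ to compensate for the cop's inability to stay still. Write $P = v_0, v_1, \dots, v_k$, and for the current robber position $\R \in T$ let $\sigma = \min(d(v_0,\R), k)$ be the (clamped) \emph{shadow index}. The cop will always sit on $P$, and I would maintain the invariant that, immediately after every cop move, the cop occupies $v_c$ with $c \in \{\sigma-1, \sigma\}$.

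First I would record the key geometric consequence of convexity. If $\R \in T$ is adjacent to a path vertex $v_m$, then $|d(v_0,\R) - m| \le 1$, so $m \in \{d(v_0,\R)-1, d(v_0,\R), d(v_0,\R)+1\}$; convexity of $P$ relative to $T$, applied with $i = d(v_0,\R)+1$, forbids the case $m = d(v_0,\R)+1$, since that would exhibit a vertex of $T$ at distance $i-1$ from $v_0$ that is adjacent to $v_i$. Hence the robber can step onto a path vertex $v_m$ only for $m \in \{d(v_0,\R)-1, d(v_0,\R)\} \cap \{0,\dots,k\} \subseteq \{\sigma-1, \sigma\}$. This is exactly where convexity buys something over a merely isometric path: an isometric path would also permit $m = d(v_0,\R)+1$, forcing the guard to cover three consecutive path vertices, which an active cop forced to oscillate cannot do with certainty.

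Next I would verify that the invariant both secures the guard and is maintainable by an active cop. For the guard: after a cop move with $c \in \{\sigma-1,\sigma\}$, every path vertex $v_m$ the robber can reach satisfies $m \in \{\sigma-1,\sigma\}$, so $|c-m| \le 1$; thus if the robber steps onto $v_m$ it either lands on the cop (immediate capture) or on a path-neighbour of the cop, and the cop captures on its next forced move. For maintainability: from a configuration with $c - \sigma \in \{-2,-1,0,1\}$ the cop can always step one vertex along $P$ to some $c' \in \{\sigma-1,\sigma\}$ (move up if $c < \sigma$, down if $c \ge \sigma$); since the robber then changes $d(v_0,\R)$ by at most one and $\sigma = \min(d(v_0,\R),k)$ is $1$-Lipschitz in $d(v_0,\R)$, the new offset $c'-\sigma'$ again lies in $\{-2,-1,0,1\}$. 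Because the cop always moves by exactly $\pm 1$ it is genuinely active, and since $1 \le \sigma \le k$ (the robber being off $P$) the target index $c'$ always lies in $\{0,\dots,k\}$, so the move is legal on $P$.

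Finally I would handle the set-up phase as in Proposition~\ref{res:shortest}: start the cop at $v_0$ and let it walk toward the shadow, stepping up whenever $c < \sigma$ and down otherwise. A short monotonicity argument — the offset $c - \sigma$ never exceeds $1$ and increases toward the band $\{\sigma-1,\sigma\}$, strictly so once $\sigma$ has reached its cap $k$ — shows the invariant is established within at most $k$ moves, matching the claimed bound. I expect the main obstacle to be the active constraint itself: the cop cannot park on the exact shadow vertex $v_\sigma$, so the argument must show that its forced oscillation between $v_{\sigma-1}$ and $v_\sigma$ never exposes a path vertex the robber can actually reach. It is precisely convexity relative to $T$, by eliminating $v_{\sigma+1}$ from the robber's reach, that makes this two-vertex hovering sufficient.
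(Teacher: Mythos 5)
Your proof is correct and takes essentially the same route as the paper's: the same invariant (after each cop move the cop occupies $v_{\sigma-1}$ or $v_\sigma$, where $\sigma$ is the clamped shadow index of the robber), the same use of convexity to forbid the robber from being adjacent to $v_{\sigma+1}$, and the same up/down oscillation rule that lets a forced-to-move cop hover on a two-vertex band, with the paper merely phrasing the set-up phase as capturing an $image(\R)$ on $P$ and the maintenance as transition-based case rules rather than your positional offset argument. One small precision: to exclude the robber (at distance $d$ from $v_0$) being adjacent to $v_m$ with $m \ge d+2$, you should appeal to isometricity of $P$ \emph{relative to} $T$ (the detour $v_0 \leadsto \R \to v_m \leadsto v_k$ would be a shorter $v_0,v_k$-path containing a vertex of $T$), not the plain triangle inequality, since $d(v_0,v_m)=m$ need not hold in $G$ when $P$ is isometric only relative to $T$ --- exactly as the paper argues in the proof of Lemma~\ref{lem:uniqueIsometric}.
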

In Lemma~\ref{lem:uniquePlanar}, the active cop can guard a convex path even if the robber is flexible. Therefore, using Lemma~\ref{lem:uniquePlanar}, and techniques similar to the ones we use for string graph, we prove the following result in Section~\ref{sec:Planar}, a corollary of which is that $\mathsf{c_a}(G) \leq 4$.

\begin{restatable}{theorem}{ThmPlanar}~\label{th:planar}
If $G$ is a planar graph, then $\mathsf{c_A}(G) \leq 4$.
\end{restatable}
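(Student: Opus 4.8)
The plan is to adapt the classical three-cop strategy of Aigner and Fromme~\cite{aigner} for planar graphs, replacing every use of ``one flexible cop guards an isometric path'' (Proposition~\ref{res:shortest}) by ``one active cop guards a convex path'' (Lemma~\ref{lem:uniquePlanar}). Throughout the game I would maintain the invariant that the robber territory $T$ is the vertex set lying inside a region of the planar embedding whose boundary consists of at most two guarded paths sharing both endpoints $a$ and $b$; since $G$ is planar, such a pair of paths separates the plane and confines $\R$ to one side. All guarding is done by active cops, so the delicate point is that every guarded path must be \emph{convex} relative to the current $T$, which is precisely the hypothesis Lemma~\ref{lem:uniquePlanar} requires.

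The iteration proceeds as in the flexible planar case. Given the current region bounded by guarded paths $Q_1,Q_2$ between $a$ and $b$, I locate an $a,b$-path $P$ that lies in the region and is shortest relative to $T$, and I bring a free active cop to guard it; this takes at most $k$ moves, during which $Q_1$ and $Q_2$ keep $\R$ trapped. The path $P$ splits the region into two subregions, $\R$ is confined to one of them, and the boundary cop on the far side is released. That cop becomes the free cop for the next round, while $P$ together with the surviving $Q_i$ forms the new, strictly smaller boundary. Because $T$ strictly decreases at every step and $G$ is finite, after finitely many rounds the region degenerates and $\R$ is captured.

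The main obstacle, and the reason the bound is four rather than three, is the convexity requirement forced by the ``must move'' constraint. A path that is merely shortest relative to $T$ is only guaranteed \emph{isometric}, and an active cop cannot in general guard an isometric path against a flexible robber: whenever $\R$ waits, the cop is compelled to move and may vacate the projection vertex it must occupy. I would resolve this in two layers. First, mirroring the technique developed for string graphs, I arrange the subdivision so that, among the (at most three) paths guarded simultaneously during a step, all but one can be taken convex relative to the current $T$: among the shortest $a,b$-paths in the region I select one that is extreme with respect to the BFS layering from $a$, so that no vertex of $T$ at distance $i-1$ from $a$ is adjacent to its $i$-th vertex, which is exactly convexity. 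Each such path is then guarded by a single active cop via Lemma~\ref{lem:uniquePlanar}. Second, for the one remaining path that may stay isometric-but-not-convex, I assign two active cops that guard it by leapfrogging: at every turn at least one of the two occupies the required projection vertex while the other absorbs the forced move, so the guard survives even against a stalling robber. Counting $1+1$ for the convex paths and $2$ for the lone isometric path gives at most four active cops present at any time.

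The crux of the verification is therefore twofold: showing that convex dividing paths relative to the shrinking robber territory always exist inside the planar region, so that Lemma~\ref{lem:uniquePlanar} applies to all but one path; and proving that the two-cop leapfrog guard is genuinely robust against a flexible robber who may wait indefinitely. Once these are established, the region-shrinking argument terminates and yields $\mathsf{c_A}(G)\le 4$; since $\mathsf{c_a}(G)\le \mathsf{c_A}(G)$, the announced bound on the active cop number of planar graphs follows at once.
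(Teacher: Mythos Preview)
Your proposal follows essentially the same route as the paper. The paper's proof of Theorem~\ref{th:planar} is a one-line pointer: use Lemma~\ref{lem:uniquePlanar} together with the techniques of Section~\ref{sec:string}. Unpacking that, the ingredients are exactly the two you isolate: one active cop guards a convex path (Lemma~\ref{lem:uniquePlanar}), two active cops guard a general isometric path by leapfrogging (this is Lemma~\ref{lem:guardPlanar}, which is your ``leapfrog'' idea verbatim), and the extension technique of Lemma~\ref{L:extendPath} is used to ensure that, among the paths guarded simultaneously, all but one are convex relative to the current robber territory. The count $1+1+2=4$ is the same.

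One point in your write-up is stated inaccurately and would trip you up if you tried to flesh it out as written. You propose to obtain convexity by selecting the \emph{new} dividing path to be ``extreme with respect to the BFS layering from $a$'' so that it is itself convex. In general no convex shortest $a,b$-path need exist relative to the current $T$ (think of $K_{2,3}$), so this selection cannot always succeed. The paper's mechanism is the opposite: the new path $P$ is constructed as an \emph{extension} of the currently isometric boundary path $P'$ (by rerouting through a carefully chosen witness to the non-convexity of $P'$, as in Lemma~\ref{L:extendPath}), and it is the \emph{old} path $P'$ that becomes convex relative to the sub-region enclosed between $P'$ and $P$. The new path $P$ remains merely isometric and is the one that carries the two-cop leapfrog guard. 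Since you explicitly invoke ``the technique developed for string graphs'', you are pointing at the right lemma; just be aware that the direction of the convexity guarantee is old-path-becomes-convex, not new-path-is-chosen-convex.
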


\section{Guarding Convex Paths}\label{sec:4cops}
In this section, we prove Lemma~\ref{lem:uniqueIsometric} and Lemma~\ref{lem:uniquePlanar}. We recall that, an isometric $u_0,u_k$-path $P$ is a convex path relative to the robber territory $T \subseteq V(G)$ if there exists no vertex  $x \in T$ such that $d(u_0,x) = i-1$ and $x\in N(u_i)$. Figure~\ref{fig:proofConvex} aids the proof of the following lemma.

\begin{figure}
    \centering
    \includegraphics[scale=0.9]{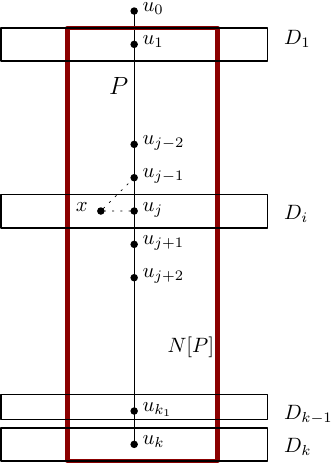}
    \caption{$P$ is a convex $u_0,u_k$ path (relative to $T$). Here $N[P]$ is denoted by heavier bold lines. The vertex $x\in (N[P]\cap D_i)\setminus\{u_i\}$. The only two possible edges between $x$ and vertices of $P$ are denoted by dotted lines.}
    \label{fig:proofConvex}
\end{figure}

\ConvexString*
\begin{proof}
We mark one cop as the \textit{sheriff}, and the other three cops are said to be its \textit{deputies}. The deputies follow the movements of the sheriff such that when the sheriff is at a vertex $u_j$, for $0 \leq j \leq k$, the deputies are at vertices $u_{j-2},u_{j-1}$ and $u_{j+1}$. Let the vertex $u_{k+1}$ refer to the vertex $u_k$, and let vertices $u_{-1}$ and $u_{-2}$ refer to the vertex $u_0$. Let $D_j = \{v~|~d(u_0,v) = j,\ \text{if}\ j<k;\ \text{and}\ d(u_0,v) \geq j,\ \text{if} \ j=k \}.$ See Figure~\ref{fig:proofConvex} for an illustration of the proof.

Since $P$ is an isometric path relative to $T$, the sheriff can guard $P$ in at most $k$ steps using Proposition~\ref{res:shortest}. Moreover, it is worth mentioning that the sheriff can do so by staying on the vertices of $P$. More specifically, after each move of the sheriff, if $\R$ is at a vertex $v \in D_j$, then the sheriff is at vertex $u_j$.
We claim that once the sheriff guards $P$, these four cops guard $N[P]$. 

To prove the above claim, we show that if $\R$ moves to a vertex $x \in N[P]$ (also $x \in T$), then $\R$ gets captured by one of the cops. If $\R$ moves to a vertex in $P$, then the sheriff will capture the robber as it is guarding $P$. Let  $\R$ moves to a vertex $x \notin V(P)$, $x \in N[P]$, and $x \in D_j$. 

Let $1 < j < k$. Since $x \in N[P]$, $x$ is adjacent to at least one vertex of $P$. Now $x$ cannot be adjacent to a vertex $y$ from $\{u_0, \ldots, u_{j-2} \}$, as through path $u_0, \ldots,y,x$ the distance $d(u_0,x) < j$, which is not possible since $x \in D_j$. Moreover, $x$ cannot be adjacent to a vertex $y$ from $\{ u_{j+2}, \ldots u_k \}$, as the path $u_0, \ldots x,y, \ldots u_k$ becomes a shorter $u_0,u_k$-path than  $P$, which is a contradiction to the fact that $P$ is an isometric path relative to $T$. Also, $x$ cannot be adjacent to $u_{j+1}$ by the definition of the convex path. Hence, $x$ can only be adjacent to $u_{j-1}$ and $u_j$, and is adjacent to at least one of them. Since the sheriff is guarding $P$, it can reach $u_j$ in this cop move, and hence is at one of the vertices from $\{u_{j-1},u_j,u_{j+1} \}$. In any case, there are cops on both $u_j$ and $u_{j-1}$. Therefore, one of these cops will capture $\R$ whenever $\R$ enters $x$.

Similar arguments hold for $j \in \{0,1,k \}$. If $j=k$, then observe that $x$ can only be adjacent to $u_k$ and $u_{k-1}$, and both these vertices would be occupied by cops. If $j= 1$, then $x$ can only be adjacent to $u_0$ and $u_1$, and both these vertices would be occupied by cops. If $j = 0$, then $x=u_0$ and hence $x$ in on $P$, and since the sheriff is guarding $P$, it will capture $\R$.  

Hence, these four cops can guard $N[P]$ in at most $k$ steps.
\end{proof}

Next, we show that for a convex path $P$ relative to the robber territory $T$, one active cop can guard $P$ against a flexible robber.

\ConvexPlanar*
\begin{proof}
Let $D_j = \{v~|~d(v_0,v) = j,\ \text{if}\ j<k;\ \text{and}\ d(v_0,v) \geq j,\ \text{if} \ j=k \}.$ We claim that if the cop $\C$ can ensure the following invariant, then $\C$ successfully guards $P$: \textit{after each move of the cop, if the robber is at a vertex in $D_i$, then $\C$ is at either $v_i$ or $v_{i-1}$}. Assume that this invariant holds and $\R$ moves to enter a vertex $v_r$ of $P$ from a vertex $u \notin V(P)$. Observe that, since $P$ is a convex path, $u \in D_r \cup D_{r+1}$. Consider the game state just before this move of $\R$. Due to the invariant condition, since $\R$ is at vertex $u$, the cop $\C$ is either at $v_r$, $v_{r-1}$, or $v_{r+1}$. In any of these case, $\C$ can move to capture $\R$ if $\R$ moves to $v_r$. 

Thus, if $\C$ can maintain this invariant, $\C$ guards $P$. Now, it remains to show that $\C$ can always reach this invariant and, once achieved, can always maintain it. $\C$ starts at vertex $v_0$. If  $\R$ is at vertex $u \in D_i$, then $\C$ assumes $image(\R)$ at vertex $v_i$. Since $image(\R)$ is restricted to $P$, $\C$ can capture $image(\R)$ in at most $k$ cop moves. Once $\C$ captures $image(\R)$, observe that we get the invariant. After that, $\C$ follows the following strategy: 
\begin{enumerate}
    \item If $\R$ moves from a vertex $u \in D_i$ to a vertex $w \in D_{i-1}$, then $\C$ moves to vertex $v_{j-1}$ from vertex $v_j$.
    
    \item  If $\R$ moves from a vertex $u \in D_i$ to a vertex $w \in D_{i+1}$, then $\C$ moves to vertex $v_{j+1}$ from vertex $v_j$.
    
    \item If $\R$ moves from a vertex $u \in D_i$ to a vertex $w \in D_{i}$: 
    \begin{enumerate}
        \item If $\C$ is at $v_i$, then it moves to $v_{i-1}$.
        \item If $\C$ is at $v_{i-1}$, then it moves to $v_i$.
    \end{enumerate}
\end{enumerate}
Since the above strategy maintains the invariant, this completes our proof.\end{proof}

\section{Cops and Robber on String Graphs} \label{sec:string}
\subsection{Definitions and Preliminaries} \label{sec:prelim}


\noindent\textbf{Segments, Faces and Regions:} A set $A \subset \mathbb{R}^2$ is \textit{arc-connected} if for any two points $a,b \in A$, the set $A$ contains a curve with endpoints $a$ and $b$. Consider a fixed string representation $\Psi$ of $G$. If two strings $\pi$ and $\pi'$ intersect at a point $p$, then we call $p$ as an \textit{intersection point}. In a fixed representation of a string graph $G$, a string can have multiple intersection points, and two strings can have multiple intersection points in common. A \textit{segment} $s$ of a string $\pi$ is a maximal continuous part of the string $\pi$ that does not contain any intersection point other than its endpoints. A string containing $k$ intersection points has $k+1$ segments.

A \textit{region} is an arc-connected area bounded by some segments of a set of strings in a string representation. A region also includes its boundary. Whenever we mention a region, it should satisfy our region definition. A \textit{face} is a region not containing any intersection point between two strings except on the boundary, and each continuous part of a string in the region intersects the boundary of the region at most once. It is a standard assumption that for a finite string graph $G$, we can have a representation such that the number of segments, intersection points, and faces is finite.

\begin{figure}
\centering
\begin{subfigure}{.5\textwidth}
  \centering
  \includegraphics[width=.92\linewidth]{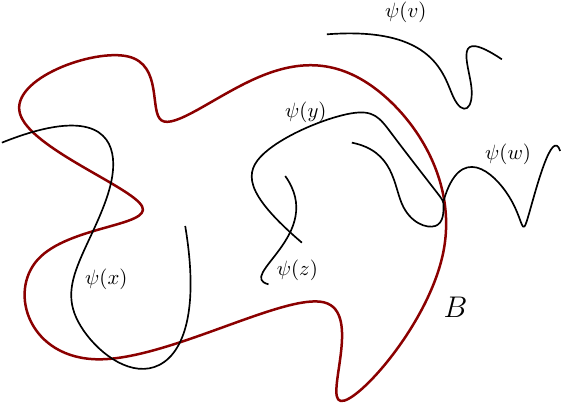}
  \caption{$\Psi$.}
  \label{fig:OC1}
\end{subfigure}%
\begin{subfigure}{.5\textwidth}
  \centering
  \includegraphics[width=.8\linewidth]{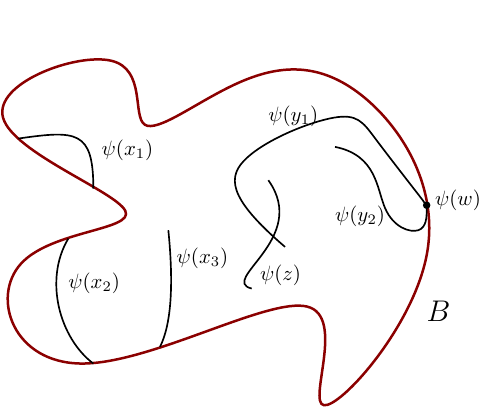}
  \caption{$\Psi_B$.}
  \label{fig:OC2}
\end{subfigure}
\caption{Here (a) represents $\Psi$ and (b) represents $\Psi_B$. $\psi(v)$ is not in $\Psi_B$, $\psi(z)$ is in $\Psi_B$. Further, for $\psi(x)$, strings $\psi(x_1)$, $\psi(x_2)$, and $\psi(x_3)$ are in $\Psi_B$; for $\psi(y)$, strings $\psi(y_1)$ and $\psi(y_2)$ are in $\Psi_B$; and for $\psi(w)$, the string $\psi(w_1)$, which is a single point, is in $\Psi_B$.} 
\label{fig:representation}
\end{figure}


\medskip
\noindent\textbf{Representation Restricted to a Region:} Consider a region $B$ of representation $\Psi$. We define the \textit{representation restricted to $B$}, denoted by $\Psi_B$, in the following manner. If a string $\psi(v)$ is completely inside $B$, then we have $\psi(v)$ in $\Psi_B$ also. If $\psi(v)$ is completely outside $B$, then $\psi(v)$ is not in $\Psi_B$. If a string $\psi(v)$ is such that some portion of $\psi(v)$ is outside $B$ and some portion of $\psi(v)$ is inside $B$, then we do the following. Let $s_1, \ldots, s_k$ be the portions of the string $\psi(v)$ such that each endpoint of $s_i$, for $0<i \leq k$, is either on the boundary of $B$ or is an endpoint of the string $\psi(v)$, and $s_i \in \Psi_B$. Then, instead of the string $\psi(v)$, we include $k$ new strings. We consider each portion $s_i$, for $0<i \leq k$, as a new string $\psi(v_i)$ in $\Psi_B$. See Figure~\ref{fig:representation} for an illustration. Let $G_B$ be the string graph corresponding to the representation $\Psi_B$. Here $V(G_B)$ is defined by the strings in $\Psi_B$, and $E(G_B)$ is defined by the intersections between these strings. 
Observe that, though $G_B$ might contain more vertices than $G$, the number of vertices in $G_B$ remains finite. Moreover, the number of faces and segments in $\Psi_B$ is not more than that in $\Psi$. Here, we also say that $G_B$ is the graph $G$ \textit{restricted} to region $B$. 

We want to note here that if there is a string completely contained in another, we can safely delete it without changing the cop number of the corresponding graph.  To see this observe that if a string $\psi(x)$ is contained in $\psi(y)$, then $N(x)\subseteq N(y)$, and hence deletion of $x$ does not change the cop number of the input graph (see Corollary~3.3 of~\cite{berarducci}). 

\medskip
\noindent\textbf{Relating Curves to Paths:} Let $\Psi$ be a fixed representation of a string graph $G$. Consider a curve $\pi$ in the representation $\Psi$.  $\pi$ is composed of some of the segments of the strings from $\Psi$. Let $\pi$ be composed of segments $s_1, \ldots, s_\ell$. Furthermore, consider a $u_1,u_k$-path $P$ in $G$. We say that the curve $\pi$ is \textit{related} to path $P$ if  each segment $s \in \{ s_1, \ldots, s_\ell\}$  is a segment of some string $\psi(u)$, $u \in \{u_1, u_2, \ldots, u_k\}$, and for each string $\psi(u)$, $u \in \{u_1, u_2, \ldots, u_k\}$, there is a segment $s \in \{s_1, \ldots, s_\ell\}$ such that $s$ is a segment of  $\psi(u)$.  Observe that $\ell\ge k$. 
Note that multiple curves may relate to the same path, and a curve may be related to multiple paths. For example, consider a complete graph $K_n$ (which is a string graph) and a string representation of $K_n$, denoted by $\Psi(K_n)$. If we choose a curve that contains at least one segment from each string of $\Psi(K_n)$, then this curve corresponds to every path of length $n-1$ in $K_n$. We would also like to mention here that the order of segments in the curve might not correspond to the order of vertices in the path. 

An \textit{isometric curve} in $\Psi$ is a curve that is related to an isometric path in $G$. We have the following observation that we use implicitly in our arguments.

\begin{observation}\label{Obs:5}
Although multiple isometric curves can be related to an isometric path, an isometric curve cannot be related to multiple isometric paths.
\end{observation}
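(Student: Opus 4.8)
The plan is to reduce this to a purely graph-theoretic statement: an isometric path of $G$ is uniquely determined by its vertex set. First I would unpack the definition of \emph{related}. If a curve $\pi$ is composed of segments $s_1, \ldots, s_\ell$, then being related to a path $P$ says exactly that the set of vertices whose strings contribute at least one segment to $\pi$ equals $V(P)$: the first condition gives ``$\subseteq V(P)$'' and the second gives ``$\supseteq V(P)$''. Hence if the isometric curve $\pi$ were related to two isometric paths $P$ and $P'$, we would immediately get $V(P) = V(P')$. So it suffices to show that two isometric paths sharing the same vertex set coincide.

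Next I would record the standard distance property of isometric paths. For an isometric $v_0,v_k$-path $P = v_0, v_1, \ldots, v_k$, I claim $d(v_i, v_j) = |i-j|$ for all $i,j$. Indeed, for $i<j$ the triangle inequality gives $k = d(v_0, v_k) \le d(v_0, v_i) + d(v_i, v_j) + d(v_j, v_k)$, while the path itself yields the upper bounds $d(v_0,v_i)\le i$, $d(v_i,v_j)\le j-i$, and $d(v_j,v_k)\le k-j$, whose sum is $k$. All three inequalities must therefore be tight, so in particular $d(v_i, v_j) = j-i$ and $d(v_0, v_i) = i$. Thus within $V(P)$ there is exactly one vertex at each distance $0, 1, \ldots, k$ from $v_0$.

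I would then identify the endpoints. Write $P = v_0, \ldots, v_k$ and $P' = w_0, \ldots, w_k$ (both have $k+1$ vertices, hence the same length $k$, since their vertex sets are equal). Because $w_0, w_k \in V(P)$, put $w_0 = v_a$ and $w_k = v_b$; isometry of $P'$ gives $d(v_a, v_b) = k$, and the distance formula forces $|a-b| = k$, so $\{a,b\} = \{0,k\}$. After possibly reversing $P'$ (which is the same path as an undirected object) I may assume $w_0 = v_0$ and $w_k = v_k$. Finally, isometry of $P'$ gives $d(v_0, w_i) = i$, while the distance formula for $P$ says the unique vertex of $V(P)$ at distance $i$ from $v_0$ is $v_i$; hence $w_i = v_i$ for every $i$ and $P = P'$, completing the reduction and the proof.

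I expect the only real subtlety to be the endpoint bookkeeping together with the harmless reversal of $P'$; once $d(v_i, v_j) = |i-j|$ is in hand, the ordering of the vertices is completely forced by their distances from a fixed endpoint. The ``although'' clause of the statement—that several distinct curves can be related to the same isometric path—requires no argument, since different curves may simply select different segments of the same collection of vertex strings.
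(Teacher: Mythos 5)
Your proof is correct, but it takes a genuinely different route from the paper's. You reduce the statement to pure graph theory: unpacking the definition of \emph{related} shows that the set of vertices whose strings contribute segments to $\pi$ must equal $V(P)$, so two isometric paths related to the same curve share a vertex set, and you then prove that an isometric path is uniquely determined by its vertex set via the distance identity $d(v_i,v_j)=|i-j|$ and the endpoint bookkeeping. The paper instead argues topologically along the curve: since non-consecutive vertices of an isometric path are non-adjacent in $G$, their strings do not intersect, so a segment of $\psi(u_i)$ can only be adjacent in $\pi$ to segments of $\psi(u_{i-1})$, $\psi(u_i)$, or $\psi(u_{i+1})$; the order of segments along $\pi$ therefore decomposes into consecutive blocks that read off the path $u_1,\ldots,u_k$ uniquely. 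Your argument is more elementary and more robust, in that it never needs the ordering of segments along the curve (and in particular does not need any monotonicity assumption); the paper's argument, while sketchier, yields strictly more structural information --- it is essentially what justifies the definition of \emph{monotone curves} introduced immediately afterwards, which the paper relies on in later proofs (e.g., to define the intersection points $p_u$ and $p_b$ in Lemma~\ref{L:extendPath}). Both proofs are valid for the statement as written; yours would not, on its own, supply that ordering structure.
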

\begin{proof}
Consider an isometric $u_1,u_k$-path $P$ and a curve $\pi$ related to $P$. Let $s_1,\ldots, s_l$ be the order of segments in $\pi$. Since $P$ is an isometric path, a segment of string $\psi(u_i)$ can only be adjacent to a segment of string $\psi(u_{i-1})$, $\psi(u_i)$, or of string $\psi(u_{i+1})$ in $\pi$. 

Let $z_1, \ldots, z_k$ be natural numbers such that $z_1 = 1$, $z_k = l$, and $z_1 < z_2 < \cdots < z_k$. Then there exists a sequence $z_1, \ldots, z_k$ such that each segment  $s \in \{ s_{z_i}, \ldots,s_{z_{i+1}} \}$, for $1 \leq i \leq k-2$, is a  segment of either the string $\psi(u_i)$ or the string $\psi(u_{i+1})$, and the segment $s_{z_{i+1}}$ is a segment of the string $\psi(u_{i+1})$. For $i=k-1$, each segment  $s \in \{ s_{z_i}, \ldots,s_{z_{i+1}} \}$, is a  segment of either the string $\psi(u_i)$ or the string $\psi(u_{i+1})$. Thus, $\pi$ can be related to only one path $ u_1, u_2, \ldots, u_k$. Therefore, an isometric curve relates to a unique isometric path.
\end{proof}

For ease of arguments in later proofs, we define \textit{monotone curves} in the following manner. Let $\pi$ be a curve related to an isometric $u_1,u_k$-path $P$, and let $s_1,\ldots, s_\ell$ be the order of segments of $\pi$. Let $z_1, \ldots, z_{k+1}$ be natural numbers such that $z_1 = 1$, $z_{k+1} = \ell$, and $z_1 < z_2 < \cdots < z_{k+1}$. Then, curve $\pi$ is said to be a \textit{monotone curve related} to $P$ if there exists a sequence $z_1, \ldots, z_{k+1}$ such that each segment  $s \in \{ s_{z_i}, \ldots,s_{z_{i+1}} \}$, for $1 \leq i \leq k-1$, is a  segment of  the string $\psi(u_i)$. For our results, whenever we consider an isometric curve related to an isometric path, we always consider a monotone curve, without mentioning it explicitly.

A curve with endpoints $a$ and $b$ is referred to as an $a,b$-\textit{curve}. Two curves are said to be \textit{internally disjoint} if they can intersect only at their respective endpoints. Let $\pi$ be a curve in a fixed representation $\Psi$. A curve $\pi'$ is said to be a \textit{sub-curve of} $\pi$ if $\pi'$ can be formed by some segments of $\pi$. We borrow the following topological lemmas by Gaven\v{c}iak et al.~\cite{gavenciak} that we will use.

\begin{lemma}[\cite{gavenciak}]\label{L:subpath1}
Let $B$ be a region. If $\pi$ is an isometric curve and $\pi' \subseteq \pi$ is a sub-curve with $\pi' \subseteq B$, then $\pi'$ is an isometric curve in  $\Psi_B$. 
\end{lemma}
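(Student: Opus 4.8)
The plan is to transport the elementary fact that a subpath of a shortest path is again a shortest path through the curve–path correspondence. First I would fix the (monotone) isometric $u_1,u_k$-path $P=u_1,\ldots,u_k$ in $G$ to which $\pi$ is related. Since $\pi$ is monotone and $\pi'$ is a connected sub-arc of $\pi$, the segments composing $\pi'$ occupy a consecutive block in the monotone ordering; hence $\pi'$ is built only from segments of the strings $\psi(u_a),\ldots,\psi(u_b)$ for some $1\le a\le b\le k$. In $\Psi_B$ each such string is cut into pieces, and I let $\psi(u_i')$ denote the piece of $\psi(u_i)$ carrying the relevant segment of $\pi'$. Because $\pi'\subseteq B$, the intersection points where consecutive segments of $\pi'$ meet lie in $B$, so $\psi(u_i')$ and $\psi(u_{i+1}')$ intersect in $\Psi_B$; moreover the $u_i$ are pairwise distinct, so their pieces are pairwise distinct strings of $\Psi_B$. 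Thus $\pi'$ is related to a genuine path $P'=u_a',\ldots,u_b'$ of $G_B$ of length exactly $b-a$.

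It then remains to show that $P'$ is isometric in $G_B$. The key tool is the projection $\phi\colon V(G_B)\to V(G)$ sending each string of $\Psi_B$ to the original string of $\Psi$ from which it was cut. The crucial structural observation is that $\phi$ cannot decrease distances when read backwards: whenever $x'y'\in E(G_B)$, the two pieces intersect at a point of $B$, which is also an intersection point of $\Psi$, so either $\phi(x')=\phi(y')$ or $\phi(x')\phi(y')\in E(G)$. Consequently the $\phi$-image of any walk in $G_B$ is a walk in $G$ (with the same or contracted steps) of no greater length.

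With this in hand I would argue by contradiction. Suppose $P'$ is not isometric in $G_B$; then there is a $u_a',u_b'$-path $Q'$ in $G_B$ of length $m<b-a$. Applying $\phi$ produces a walk in $G$ from $\phi(u_a')=u_a$ to $\phi(u_b')=u_b$ of length at most $m$. But $u_a,\ldots,u_b$ is a subpath of the isometric path $P$, hence itself isometric, so $d_G(u_a,u_b)=b-a$; combining gives $b-a\le m<b-a$, a contradiction. Therefore no such $Q'$ exists, $P'$ is a shortest $u_a',u_b'$-path in $G_B$, and equivalently $\pi'$ is an isometric curve in $\Psi_B$.

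The main obstacle is making the projection step fully rigorous: one must confirm that every intersection point used along $Q'$, being a point of $B$, survives as a genuine edge (or a collapse) in $G$, and that contracting pieces of a common original string can only shorten the projected walk, so that length never increases under $\phi$. The reduction of $\pi'$ to the single consecutive subpath $u_a,\ldots,u_b$ also relies essentially on the monotonicity convention adopted for $\pi$; absent that convention a sub-curve could in principle interleave segments belonging to non-consecutive vertices of $P$. Once these two points are secured, the remainder is routine.
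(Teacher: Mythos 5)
This lemma is imported from Gaven\v{c}iak et al.~\cite{gavenciak} and stated without proof in the paper, so there is no in-paper argument to compare yours against; it has to be judged on its own terms and against the supporting machinery the paper does develop. Judged that way, your proof is correct, and it is assembled from exactly the ingredients the paper makes explicit elsewhere. Your first step --- that a connected sub-curve of a monotone isometric curve uses a consecutive block of strings $\psi(u_a),\ldots,\psi(u_b)$, each contributing a connected arc lying inside $B$ and hence inside a single piece $\psi(u_i')$ of $\Psi_B$, so that $\pi'$ is related to a genuine $u_a',u_b'$-path $P'$ of length $b-a$ in $G_B$ --- is what the paper's monotone-curve convention together with the adjacency analysis in the proof of Observation~\ref{Obs:5} delivers. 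Your second step, the projection $\phi$ under which every edge of $G_B$ maps to an edge of $G$ (the collapse case $\phi(x')=\phi(y')$ in fact never arises, because the paper assumes strings are non-self-intersecting, so two disjoint pieces of one string cannot meet), is precisely the observation the authors state and use in their own proof of the convex analogue, Lemma~\ref{subpath}: the distance $d(x',y')$ in $G_B$ cannot be less than $d(x,y)$ in $G$. Combined with the standard fact that a subpath of an isometric path is isometric, this yields your contradiction, and $P'$ is a shortest $u_a',u_b'$-path in $G_B$ as required. The caveat you flag about monotonicity is genuine but harmless here: without that convention the portion of $\pi'$ on some $\psi(u_i)$ could be disconnected and split over several vertices of $G_B$, breaking the clean identification of $P'$; since the paper adopts monotone curves globally (``whenever we consider an isometric curve related to an isometric path, we always consider a monotone curve''), your argument is a valid, self-contained substitute for the citation.
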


\begin{lemma}[\cite{gavenciak}]\label{L:subpath2}
Let $\pi_1$ and $\pi_2$ be two internally disjoint isometric $a,b$-curves, with $a\neq b$, bounding a region $R$. For any simple $a,b$-curve $\pi_3 \subseteq R$ containing at least one interior point of $R$, we have that every arc-connected component of $R \backslash(\pi_1 \cup \pi_2 \cup \pi_3)$ is bounded by two simple and internally disjoint curves $\pi'_i$ and $\pi'_3$ with $\pi'_i \subseteq \pi_i$, $\pi'_3\subseteq \pi_3$ and $i \in \{1,2\}$.
\end{lemma}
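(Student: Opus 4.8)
The plan is to prove this as a purely topological statement and to reduce everything to the model case of a round disk. First I would observe that since $\pi_1$ and $\pi_2$ are internally disjoint and share exactly the two endpoints $a \neq b$, their union $J = \pi_1 \cup \pi_2$ is a Jordan curve and $R$ is the closed region it bounds. By the Jordan--Schoenflies theorem there is a homeomorphism $h$ of the plane carrying $R$ onto the closed unit disk $\overline{D}$ and carrying $a,b$ to two distinct boundary points, so that $\pi_1$ and $\pi_2$ become the two boundary arcs $C_1, C_2$ into which those points split $\partial D$. Since the conclusion is invariant under homeomorphism, it suffices to prove the statement for $\overline{D}$, with $\pi_3$ replaced by the simple arc $\gamma := h(\pi_3) \subseteq \overline{D}$ from $h(a)$ to $h(b)$ that meets the open disk $D$.

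Next I would analyse how $\gamma$ sits inside $\overline{D}$. The set $\gamma \cap \partial D$ is closed, and (using the standing finiteness assumption on the representation) consists of finitely many points and sub-arcs; these cut $\gamma$ into finitely many maximal pieces, each of which is either a \emph{crosscut} (its relative interior lies in the open disk $D$) or lies entirely on $\partial D$. The key tool is the classical crosscut lemma: a crosscut of a Jordan domain splits it into exactly two Jordan subdomains, each bounded by the crosscut together with one of the two boundary arcs into which the crosscut's endpoints divide the boundary. I would process the crosscuts of $\gamma$ one at a time, at each step applying this lemma inside the subdomain currently containing that crosscut; because $\gamma$ is simple its crosscut pieces are pairwise internally disjoint, and an induction on the number of crosscuts shows that the components of $D \setminus \gamma$ are open Jordan subdomains, the closure of each being bounded by exactly one crosscut piece $\gamma' \subseteq \gamma$ and one boundary sub-arc of $\partial D$.

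It then remains to check the two refinements in the statement: that each bounding boundary-arc lies in a single $\pi_i$, and that the two bounding curves are simple and internally disjoint. For the first, note that the only points common to $\pi_1$ and $\pi_2$ are $a$ and $b$, which are precisely the endpoints of $\pi_3$. Since $\gamma$ emanates from $h(a)$ and $h(b)$ into $D$, the crosscut pieces incident to these two points ``cap them off'': locally at $h(a)$ the disk is a half-disk bounded by the initial parts of $C_1$ and $C_2$, and $\gamma$ enters it and splits it into two wedges, one bordered by $C_1$ and one by $C_2$, which lie in different components. Consequently no component's boundary-arc crosses $h(a)$ or $h(b)$, so every such arc is contained in one of the open arcs $C_1 \setminus \{h(a),h(b)\}$ or $C_2 \setminus \{h(a),h(b)\}$, i.e.\ is a sub-curve of a single $\pi_i$ after pulling back by $h^{-1}$. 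Simplicity and internal-disjointness of the two bounding curves are inherited from the simplicity of $\gamma$ and $\partial D$ together with the crosscut lemma, and translating back through $h^{-1}$ gives the claim.

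The main obstacle is the possibility that $\pi_3$ repeatedly touches the boundary $\pi_1 \cup \pi_2$ at points other than $a$ and $b$. If the interior of $\pi_3$ stayed entirely inside $R$ the result would be a single application of the crosscut lemma, but the weaker hypothesis ``contains at least one interior point'' forces the decomposition-into-crosscuts bookkeeping above, and the care needed to confirm that after all splittings each resulting component still touches only one of the two boundary curves. I would also make the crosscut lemma explicit, either as a cited black box or via a short direct argument from the Jordan curve theorem, since the whole proof rests on it.
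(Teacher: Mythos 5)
The paper itself gives no proof of this statement: Lemma~\ref{L:subpath2} is imported verbatim from Gaven\v{c}iak et al.~\cite{gavenciak} as a black box, so your proposal can only be assessed on its own terms rather than against an in-paper argument.

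Your overall route (Jordan--Schoenflies normalization to the closed disk, decomposition of $\pi_3$ into crosscuts, induction via the crosscut lemma, then a local analysis at $a$ and $b$) is the natural one, but two of your intermediate claims are false as stated, and the induction would fail exactly where the content of the lemma lies. First, the invariant ``each component of $D\setminus\gamma$ is bounded by \emph{exactly one} crosscut piece and one boundary sub-arc'' is wrong: let $\gamma$ run from $h(a)$ into the open disk, touch $C_1$ once at an interior point $p$, return into the interior, and end at $h(b)$. Then $\gamma$ has two crosscut pieces, and the component on the $C_2$ side of $\gamma$ is bounded by \emph{both} of them together with all of $C_2$. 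The lemma survives because the two crosscuts share the touching point $p$, so their union is still a single connected sub-curve of $\pi_3$; but your induction must carry the weaker invariant ``each component is bounded by a \emph{connected} sub-arc of $\gamma$ and a connected sub-arc of $\partial D$, internally disjoint,'' and must justify why connectivity of the $\gamma$-part is preserved at each cut --- this uses that the crosscuts are processed in their order along $\gamma$ and that consecutive ones are glued at touching points or along boundary-running pieces of $\gamma$. Second, your ``capping off'' argument at $h(a)$ and $h(b)$ silently assumes that $\gamma$ enters the open disk immediately at its endpoints. If $\pi_3$ begins by running along $\pi_1$ (which your own setup allows, since you permit maximal pieces of $\gamma$ to lie on $\partial D$), then a single component can have boundary points on both $C_1$ and $C_2$ on the two sides of $h(a)$, so the claim ``no component's boundary-arc crosses $h(a)$ or $h(b)$'' fails. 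The conclusion is saved only because the $C_1$-portion of that boundary is itself part of $\gamma$ and must be assigned to $\pi'_3$, after which one still has to check that the leftover portion of $\partial D$ is connected and lies in a single $\pi_i$. Both defects are repairable within your framework, but as written the induction step and the endpoint analysis do not go through.
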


If a curve $\pi$ is related to an isometric (resp. convex) path $P$ relative to $T$, then $\pi$ is referred to as an \textit{isometric} (resp. \textit{convex}) \textit{curve relative to} $T$.
We extend Lemma~\ref{L:subpath1} to accommodate the convex curves in the following lemma.
\begin{lemma}\label{subpath}
Let $B$ be a region of $\Psi$. If $\pi$ is a convex curve relative to $T$ and $\pi' \subseteq \pi$ is a sub-curve with $\pi' \subseteq B$, then $\pi'$ is a convex curve relative to $T$ in $\Psi_B$.  
\end{lemma}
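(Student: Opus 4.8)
The plan is to reduce the statement about convex curves to the already-established \cref{L:subpath1} for isometric curves, and then separately verify that the extra ``convexity'' condition survives restriction to $B$. Recall that a convex curve relative to $T$ is by definition an isometric curve relative to $T$ that additionally satisfies the forbidden-neighbour condition: there is no vertex $x \in T$ with $d(u_0,x) = i-1$ and $x \in N(u_i)$ along the underlying path. So I would split the proof into two claims: (i) $\pi'$ is an isometric curve relative to $T$ in $\Psi_B$, and (ii) the convexity condition holds for the path underlying $\pi'$ in $\Psi_B$.

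For claim (i), I would first note that \cref{L:subpath1} is stated for isometric curves (not explicitly ``relative to $T$''), so I would either invoke the natural relativized analogue of \cref{L:subpath1} or reprove it in one line: any sub-curve $\pi' \subseteq \pi$ contained in $B$ is related to a subpath $P'$ of the underlying path $P$, and since $P$ is isometric relative to $T$, its subpaths are isometric relative to $T$ as well; passing to $\Psi_B$ only removes strings and intersection points, so no \emph{shorter} connecting curve through $T$ can appear in $\Psi_B$ that was not already available in $\Psi$. This gives that $\pi'$ is isometric relative to $T$ in $\Psi_B$.

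For claim (ii), suppose for contradiction that $\pi'$ fails convexity in $\Psi_B$, i.e.\ there is a vertex $x \in T$ (a string in $\Psi_B$) with $d_{G_B}(u'_0, x) = i-1$ and $x \in N(u'_i)$, where $u'_0, u'_1, \dots$ is the path underlying $\pi'$. The key observation is that distances and adjacencies in $G_B$ can only be \emph{longer} or \emph{fewer} than in $G$: every string of $\Psi_B$ is a portion of a string of $\Psi$, and every path in $G_B$ lifts to a walk in $G$ of the same or shorter length. Hence a violating vertex $x$ in $\Psi_B$ would lift to a violating vertex (the string of $\Psi$ of which $x$ is a portion) for the convexity of $\pi$ relative to $T$ in $\Psi$, contradicting that $\pi$ is convex relative to $T$. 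This is the crux of the argument and the step I expect to require the most care, since I must handle the fact that a single string of $\Psi$ may split into several strings of $\Psi_B$ and argue that the distance-from-$u_0$ index is preserved (or only decreases) under lifting, so that the forbidden configuration is genuinely inherited upward.

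Combining the two claims, $\pi'$ is both isometric relative to $T$ in $\Psi_B$ and satisfies the convexity condition there, hence is a convex curve relative to $T$ in $\Psi_B$, completing the proof. The main obstacle is purely bookkeeping: carefully tracking how the index $i = d(u_0, \cdot)$ behaves when restricting to $B$ and when splitting strings, and making sure the ``no shortcut through $T$'' guarantee transfers cleanly between $\Psi$ and $\Psi_B$; the topological content is entirely supplied by \cref{L:subpath1}.
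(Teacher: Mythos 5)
Your proposal is correct and takes essentially the same approach as the paper: both split the statement into (a) the isometric property of the sub-curve and (b) the extra convexity condition, and both rest on the same two facts --- that distances in $G_B$ can never be smaller than in $G$, and that a violating vertex in $\Psi_B$ lifts to a vertex of $\Psi$ contradicting the convexity (or, if its distance strictly drops, the isometry) of $\pi$ relative to $T$. The only difference is organizational: the paper first shows sub-curves of convex curves are convex within $\Psi$ and then transfers convexity from $\Psi$ to $\Psi_B$ by a direct argument, whereas you handle the isometric part via Lemma~\ref{L:subpath1} and check the convexity condition in $\Psi_B$ by contradiction; the mathematical content is identical.
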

\begin{proof}
First, we prove that $\pi'$ is a convex curve in $\Psi$ relative to $T$. Let the curve $\pi$ be related to the convex $u_0,u_k$-path $P$ in $G$. Then any sub-curve $\pi' \subseteq \pi$ would relate to a $u_i,u_j$-path $P' = u_i, u_{i+1}, \ldots, u_j$, where $0\leq i \leq j \leq k$. For contradiction, let us assume that $\pi'$ is not a convex curve relative to $T$ in $\Psi$, and hence $P'$ is not a convex path relative to $T$ in $G$. Thus, there exists a vertex $v \in (V(T) \setminus V(P))$ and some $u_\ell$ (where $i \leq \ell \leq j$) such that $d(u_i,v) = d(u_i, u_\ell)-1$ and $u_\ell \in N[v]$. Therefore, $d(u_0,u_i) + d(u_i,v)  = d(u_0,u_i)+ d(u_i, u_\ell)-1$. Hence, we have a vertex $v \in(V(T) \setminus V(P))$ such that $d(u_0,v) =d(u_0,u_\ell) - 1$ and $u_\ell \in N[v]$. This contradicts the fact that $P$ is a convex path related to the convex curve $\pi$, both relative to $T$. Hence, $\pi'$ is a convex curve in $\Psi$ and $P'$ is a convex path in $G$, both relative to $T$.

Next, we show that if a curve $\pi'$ is a convex curve in $\Psi$ relative to $T$ and $\pi' \subseteq \Psi_B$ (for some $B$ of $\Psi)$, then $\pi'$ is a convex curve in $\Psi_B$ relative to $T$. Consider two vertices $x$ and $y$ of $G$ corresponding to strings $\psi(x)$ and $\psi(y)$ in $\Psi$, respectively. Let $x'$ and $y'$ be two vertices in $G_B$ such that $\psi(x')$ is a portion of $\psi(x)$ and $\psi(y')$ is a portion of $\psi(y)$. Then observe that $d(x',y')$ in $G_B$ cannot be less than $d(x,y)$ in $G$. 

Now consider a vertex $v'$ in $G_B$ such that $v' \notin V(P')$, corresponding to string $\psi(v')$ in $\Psi_B$, such that $u_\ell \in N[v']$. Let $\psi(v)$ be a string in $\Psi$, corresponding to vertex $v$ such that $v \notin V(P)$, such that $\psi(v')$ is a portion of string $\psi(v)$ in $\Psi$. Hence $u_\ell$ is also a neighbour of $v$ in $G$. Since $P'$ is a convex path in $G$, either $d(u_i, v) = d(u_i,u_\ell)+1$ or $d(u_i, v) = d(u_i,u_\ell)$ in $G$. Hence, $d(u_i, v) \geq d(u_i,u_\ell)$ in $G$.  Since $d(x',y')$ in $G_B$ cannot be less than $d(x,y)$ in $G$, $d(u_i, v') \geq d(u_i,u_\ell)$. Thus, there cannot be any vertex $v'$ in $V(G_B) \setminus V(P')$ such that $u_\ell \in N[v']$ and  $d(u_i, v') = d(u_i,u_\ell)-1$. Hence, $P'$ is a convex path in $G_B$ and $\pi'$ is a convex curve in $\Psi_B$, both relative to $T$.
\end{proof}
We note here that if a path $P$ is an isometric (resp. convex) path relative to $T\subseteq V(G)$, then $P$ is an isometric (resp. convex) path relative to every subset $T' \subseteq T$. Similarly, if a curve $\pi$ is an isometric (resp. convex) curve relative to  $T$, then $\pi$ is an isometric (resp. convex) curve relative to every subset $T' \subseteq T$. Moreover, for a curve $\pi$ related to a path $P$, we say that $\pi$ is guarded if $N[P]$ is guarded.





\subsection{Bounding the Robber Region}

\noindent\textbf{Geometric Robber Territory:} Consider a fixed string representation $\Psi$ of a string graph $G$. We extend the definition of robber territory to the representation in the following manner. Consider a region $B$. Let $\R$ be on a vertex $u$ such that all points of the string $\psi(u)$ are inside the region $B$ and $\psi(u)$ does not intersect the boundary of $B$. Then we say that  $\Psi_B$ is the \textit{geometric robber territory} if $\R$ cannot move to a vertex $v$ such that the string $\psi(v)$ intersects the boundary of $B$, without getting captured. We note that although the strings that intersect the boundary of $B$ might be in $\Psi_B$, they are not accessible to $\R$ when we say that $\Psi_B$ is the geometric robber territory. See Figure~\ref{fig:geometric} for an illustration. Below, we show three ways to restrict the geometric robber territory.

\begin{figure}
    \centering
    \includegraphics[width=0.45\linewidth]{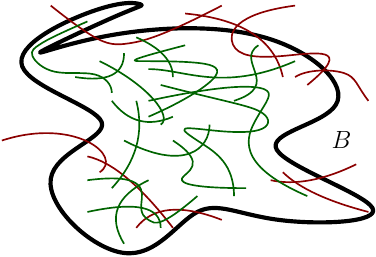}
    \caption{Illustration of geometric robber territory. Here the boundary of $B$ is depicted in bold black, the strings in $\Psi_B$ that do not intersect with boundary of $B$ are depicted in green, and the strings that intersect with the boundary of $B$ are depicted in red. $\Psi_B$ is the geometric robber territory if: (i) $\R$ is on a green string, and (ii) $\R$ gets captured if it moves to a red string. Although parts of red strings are in $\Psi_B$, none of these strings are accessible to $\R$.}
    \label{fig:geometric}
\end{figure}

Let $B$ be a region bounded by two internally disjoint $a,b$-curves $\pi_1$ and $\pi_2$. Then we also denote $\Psi_B$ by $\Psi_{\pi_1,\pi_2}$. Note that, here $B$ is the geometric robber territory if $\R$ is on a vertex $v \in V(G_B)\setminus (N[P_1] \cup N[P_2])$ and $\R$ cannot move to a vertex $u \in N[P_1 \cup P_2]$. Hence, we have the following observation. See Figure~\ref{fig:2Path} for an illustration.

\begin{observation}\label{O:curves}
Let $\pi_1$ and $\pi_2$ be two internally disjoint $a,b$-curves and $\R$ is in the region $\Psi_{\pi_1,\pi_2}$. If both curves $\pi_1$ and $\pi_2$ are guarded, then $\R$ cannot leave the region $\Psi_{\pi_1,\pi_2}$ without getting captured, and $\Psi_{\pi_1,\pi_2}$ becomes the geometric robber territory.
\end{observation}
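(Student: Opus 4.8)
The plan is to reduce this to the elementary topological fact that the union $\pi_1 \cup \pi_2$ of two internally disjoint $a,b$-curves is a closed curve whose bounded side is exactly the region $B = \Psi_{\pi_1,\pi_2}$, so any string that leaves $B$ must cross this boundary. First I would observe that, by hypothesis, $\R$ sits on a vertex $u$ whose string $\psi(u)$ lies entirely inside $B$ and avoids its boundary; this is precisely the starting configuration demanded by the definition of geometric robber territory. It therefore suffices to prove that $\R$ cannot move to any vertex $v$ whose string $\psi(v)$ meets the boundary of $B$ without being captured, since no other vertex lets $\R$ exit $B$.

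Second, I would establish the key adjacency statement. If $\psi(v)$ meets the boundary $\pi_1 \cup \pi_2$, then it meets $\pi_1$ or $\pi_2$; say it meets $\pi_1$. Since $\pi_1$ is related to the path $P_1$, it is composed only of segments of strings $\psi(w)$ with $w \in V(P_1)$, so $\psi(v)$ must intersect $\psi(w)$ for some such $w$, giving $v \in N[w] \subseteq N[P_1]$ (and symmetrically $v \in N[P_2]$ if $\psi(v)$ meets $\pi_2$). Hence every vertex whose string touches the boundary of $B$ lies in $N[P_1] \cup N[P_2]$.

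Third, I would invoke the guarding hypothesis. Because $\pi_1$ and $\pi_2$ are guarded, by definition $N[P_1]$ and $N[P_2]$ are guarded, which means $\R$ is captured the instant it enters a vertex of $N[P_1] \cup N[P_2]$. Combined with the previous step, the only vertices through which $\R$ could possibly leave $B$ are already guarded, so $\R$ cannot exit $\Psi_{\pi_1,\pi_2}$ without capture. This is exactly the assertion that $\Psi_{\pi_1,\pi_2}$ becomes the geometric robber territory.

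The step I expect to require the most care is the topological exit argument: one must justify that a connected string which starts strictly inside $B$ and reaches a point on or beyond the closed curve $\pi_1 \cup \pi_2$ necessarily crosses that curve, a Jordan-curve-type claim. A secondary point of bookkeeping is the restriction $\Psi_B$, where strings straddling the boundary are split into several vertices of $G_B$; I would handle this by noting that each such piece is still a sub-string of an original string that intersects the boundary, hence still belongs to $N[P_1] \cup N[P_2]$ and offers $\R$ no unguarded escape.
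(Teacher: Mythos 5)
Your proposal is correct and matches the paper's reasoning: the paper states this as an immediate consequence of its definitions, since any string meeting the boundary $\pi_1 \cup \pi_2$ intersects a string of $P_1$ or $P_2$, hence corresponds to a vertex of $N[P_1] \cup N[P_2]$, and guarding a curve is defined precisely as guarding the closed neighbourhood of its related path. Your added care about the Jordan-curve exit argument and the splitting of straddling strings in $\Psi_B$ only makes explicit what the paper leaves implicit.
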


\begin{figure}
\centering
\begin{subfigure}{.5\textwidth}
  \centering
  \includegraphics[width=.6\linewidth]{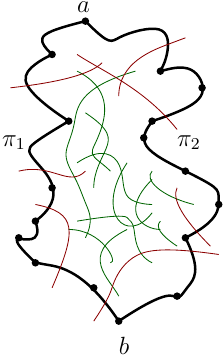}
  \caption{Geometric robber territory $\Psi_{\pi_1,\pi_2}$ defined by two internally disjoint $a,b$-curves $\pi_1$ and $\pi_2$.}
  \label{fig:2Path}
\end{subfigure}%
\begin{subfigure}{.5\textwidth}
  \centering
  \includegraphics[width=.6\linewidth]{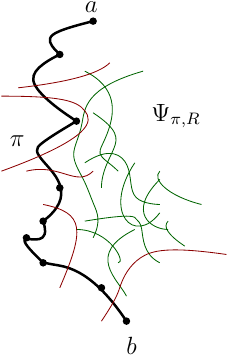}
  \caption{Geometric robber territory $\Psi_{\pi,R}$ defined by a single top-down $a,b$-curve $\pi$.}
  \label{fig:1Path}
\end{subfigure}
\caption{In both (a) and (b), the curves contain points to highlight the segments that form them. In both subfigures, $\R$ is on a green string and cannot access a red string without getting captured immediately.} 
\label{fig:Bound}
\end{figure}

Consider a fixed string representation $\Psi$ of a string graph $G$ in $\mathbb{R}^2$. We say that a string $\psi(v)$ is a \textit{top-most string} (\textit{bottom-most string}) if some point on $\psi(v)$ has the highest (lowest) $y$-$coordinate$ in $\Psi$. 
Here, we also say that $v$ is a \textit{top-most vertex} (\textit{bottom-most vertex}). 
Let $u$ and $v$ be two distinct vertices of $G$ such that $u$ is a top-most and $v$ is a bottom-most vertex. 
Let $a$ and $b$ be points on strings $\psi(u)$ and $\psi(v)$, respectively, such that $a$ and $b$ has the highest and lowest $y$-$coordinate$ in $\Psi$, respectively. Then, an $a,b$-curve $\pi$, related to an isometric $u,v$-path $P$, is referred to as a \textit{top-bottom} curve. Note that this curve may not be unique. 

Observe that if a vertex $x \notin N[P]$, then $\psi(x)$ lies either completely on the left of $\pi$ or completely on the right of $\pi$, and $\psi(x)$ does not intersect with $\pi$. If a string $\psi(x)$ lies on the left (or right) of the curve $\pi$, then we also say that vertex $x$ lies on the left (or right) of $P$. We say that the robber \textit{crosses} the curve $\pi$ (or the path $P$) if $\R$ moves (in some finite rounds) from a vertex $u$, completely on the left of $\pi$, to a vertex $v$, completely on the right of $\pi$, or vice versa.

We extend this idea of bounding a region $B$ with two internally disjoint curves $\pi_1$ and $\pi_2$, to bounding the region on the \textit{left} or the \textit{right} of a top-bottom curve $\pi$.  The region $B$ on the left (right) of the curve $\pi$ contains the points both on $\pi$ and on the left (right) of $\pi$. Here $\Psi_B$ is defined analogously and is also denoted by $\Psi_{\pi,L}$ ($\Psi_{\pi,R}$). We have the following observation. See Figure~\ref{fig:1Path} for an illustration.

\begin{observation}\label{O:path}
Let $\pi$ be a top-bottom curve related to an isometric path $P$. Then five cops can restrict the geometric robber territory to either $\Psi_{\pi,L}$ or $\Psi_{\pi,R}$.
\end{observation}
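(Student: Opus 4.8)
The plan is to prove Observation~\ref{O:path} by reducing the one-sided bounding problem to the two-curve bounding setup of Observation~\ref{O:curves}, using the top-bottom curve $\pi$ together with a suitable boundary of the plane. First I would recall the setup: $\pi$ is an $a,b$-curve related to an isometric $u,v$-path $P$, where $a$ realizes the highest $y$-coordinate and $b$ the lowest $y$-coordinate in $\Psi$. By Proposition~\ref{lem:shortr}, five cops can guard $N[P]$ after finitely many moves, so the curve $\pi$ becomes guarded. The key geometric point is that because $a$ is a topmost point and $b$ is a bottommost point of the whole representation, every string lies inside the horizontal slab between the $y$-coordinates of $a$ and $b$; consequently no string can ``go around'' the endpoints of $\pi$. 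Thus once $\pi$ is guarded, any vertex $x \notin N[P]$ has its string $\psi(x)$ lying strictly to the left or strictly to the right of $\pi$, and these two sides are topologically separated by $\pi$.

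The central step is to argue that a guarded top-bottom curve prevents $\R$ from crossing. I would make this precise as follows: suppose $\R$ is on a vertex whose string lies entirely on the left of $\pi$ (the right case is symmetric). To reach a vertex whose string lies entirely on the right, $\R$ must at some round occupy a vertex $w$ whose string $\psi(w)$ intersects $\pi$; but then $w \in N[P]$, and since the five cops guard $N[P]$ (by Proposition~\ref{lem:shortr}), $\R$ is captured upon entering $w$. Here I would invoke the fact that $a$ and $b$ are the extreme points of the representation to rule out the degenerate possibility of a string connecting the two sides by passing beyond an endpoint of $\pi$: since nothing lies above $a$ or below $b$, the curve $\pi$ together with the outer region genuinely separates the plane into a left part and a right part. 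This is exactly the analogue, for a single top-bottom curve, of the separation provided by two internally disjoint curves in Observation~\ref{O:curves}.

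Having established non-crossing, I would conclude by applying the geometric robber territory definition directly. Let $B$ be the region on the left of $\pi$ (together with $\pi$ itself), so $\Psi_B = \Psi_{\pi,L}$, and similarly $\Psi_{\pi,R}$ for the right region. After the five cops guard $N[P]$, the robber is confined to whichever side it currently occupies, and it cannot move to any string intersecting the boundary $\pi$ without getting captured. By the definition of geometric robber territory, this means the five cops have restricted the geometric robber territory to either $\Psi_{\pi,L}$ or $\Psi_{\pi,R}$, whichever side contains $\R$ at the moment the guarding is completed. This is precisely the claimed statement.

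The main obstacle I anticipate is the topological separation argument, i.e.\ rigorously justifying that a single top-bottom curve, rather than a closed curve or a pair of curves, suffices to split the representation into exactly two sides with no way around. The subtlety is that $\pi$ is an open arc, so a priori a string could connect the two ``sides'' by routing past the endpoints $a$ or $b$; the entire force of choosing $a$ as a topmost and $b$ as a bottommost point of $\Psi$ is to forbid this, since any such detour would require a point strictly above $a$ or strictly below $b$, contradicting their extremality. Making this precise likely requires a clean statement that $\pi$ plus the point at infinity (or the unbounded outer face) forms a closed separating curve in the one-point compactification of the plane, after which a standard Jordan-curve argument applies. Everything else---the guarding via Proposition~\ref{lem:shortr} and the translation into the geometric robber territory language---is routine bookkeeping.
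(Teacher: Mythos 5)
Your proposal is correct and follows essentially the same route as the paper's proof: guard $N[P]$ with five cops via Proposition~\ref{lem:shortr}, then use the fact that $\pi$ runs from a top-most to a bottom-most point of the representation to conclude that any string path from the left side to the right side must pass through a vertex of $N[P]$, so the robber cannot cross and the geometric robber territory becomes $\Psi_{\pi,L}$ or $\Psi_{\pi,R}$. Your extra care about the endpoints (nothing lies above $a$ or below $b$, so no string can route around the arc) is exactly the point the paper compresses into one sentence.
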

\begin{proof}
The curve $\pi$ is a continuous curve from a top-most point $a$ to a bottom-most point $b$ in the string representation. Hence, if a vertex $x$ is on left of $P$ and a vertex $y$ is on right of $P$, every path from $x$ to $y$ passes through a vertex of $N[P]$. Now, five cops can guard $N[P]$ using Proposition~\ref{lem:shortr}. This restricts $\R$ to cross the curve $\pi$. Hence, if $\R$ was on a vertex $x$ such that $\psi(x)$ is on the left (right) of $\pi$, then $\Psi_{\pi,L}$ ($\Psi_{\pi,R}$) becomes the geometric robber territory.
\end{proof}


The following observation provides one more way to bound the robber territory. 

\begin{observation}\label{O:vertex}
Let $x$ be a cut vertex of $G$ such that $G-x$ gives a connected component $G'$. If $\mathcal{R}$ is on a vertex  $v \in V(G')$ and a cop is occupying the vertex $x$, then $\R$ is restricted to $V(G')$ and $V(G')$ becomes the robber territory.
\end{observation}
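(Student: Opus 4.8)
The plan is to argue purely from the defining property of a cut vertex: since $x$ separates $G'$ from the rest of the graph, the only vertex outside $V(G')$ that any vertex of $V(G')$ can reach in one move is $x$ itself, and $x$ is already blocked by a cop. The structure of the proof is therefore to (i) pin down the edge structure across the cut, (ii) translate it into a constraint on $\R$'s one-move options, and (iii) invoke the definition of robber territory.

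First I would recall that $G'$ is a connected component of $G - x$. By definition of a connected component of $G-x$, there is no edge of $G$ joining a vertex of $V(G')$ to a vertex lying in any other component of $G-x$; every such cross-component edge would have to pass through the removed vertex $x$. Consequently, if $w$ is any vertex of $G$ adjacent to some vertex of $V(G')$, then either $w \in V(G')$ or $w = x$. In other words, $N[V(G')] \subseteq V(G') \cup \{x\}$.

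Next I would use this to bound $\R$'s escape options. Suppose $\R$ sits on a vertex $v \in V(G')$ and, on its turn, tries to move to a vertex $w \notin V(G')$. Since a robber move goes to an adjacent vertex (or stays put), $w$ must be a neighbour of $v$, hence $w \in N[V(G')] \setminus V(G') = \{x\}$, so $w = x$. But a cop already occupies $x$, so the moment $\R$ steps onto $x$ it shares a vertex with a cop and is captured. Thus $\R$ has no move leaving $V(G')$ that avoids immediate capture, while all moves that keep it inside $V(G')$ are free; this is exactly the statement that $V(G')$ is the robber territory, and the observation follows.

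I do not expect any genuine obstacle here: the claim is essentially a restatement of the separating property of a cut vertex, and the only thing to be careful about is to phrase the capture condition in terms of the paper's definition of robber territory (namely, that $\R$ cannot move to a vertex outside the set without being captured on the following move). The single-sentence core is that the cut vertex $x$ is the unique gateway out of $V(G')$, and a cop parked on it seals that gateway.
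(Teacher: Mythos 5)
Your proof is correct and is exactly the argument the paper has in mind: the paper states this observation without any proof, treating it as immediate from the fact that every edge (and hence every walk) leaving $V(G')$ must pass through the cop-occupied cut vertex $x$. Your write-up simply makes that one-move gateway argument explicit and ties it to the paper's definition of robber territory, so there is nothing missing and nothing genuinely different in approach.
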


We also define isometric (resp. convex) paths relative to geometric robber territories. We say that a path $P$ is \textit{isometric} (resp. \textit{convex}) \textit{path relative} to $\Psi_B = \Psi_{\pi_1,\pi_2}$ if $P$ is an isometric (resp. convex) path relative to $T = V(G_B) \setminus V(P_1 \cup P_2)$. Similarly, a curve $\pi$ is \textit{isometric} (resp. \textit{convex}) \textit{curve relative} to $\Psi_{\pi_1,\pi_2}$ if $\pi$ is an isometric (resp. convex) curve relative to $T = V(G_B) \setminus V(P_1 \cup P_2)$.

\subsection{Extending an Isometric Path/Curve}

Informally speaking, in this section, we show that if one team of cops is guarding an isometric path and we want to employ a new team of cops to guard another isometric path, then the new team can always find an isometric path such that in the region bounded by some specific curves of these paths, the first path is a convex path relative to the region bounded.  

\begin{figure}
    \centering
    \includegraphics[scale=1.2]{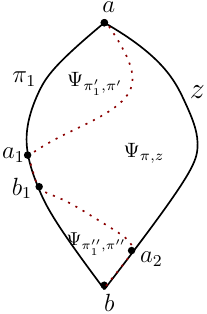}
    \caption{An illustration of extending an isometric curve. Here, $\Psi_{\pi_1,z}$ is defined by two $a,b$-curves $\pi_1$ and $z$ such that $\pi_1$ is an isometric curve relative to $\Psi_{\pi_1,z}$. Moreover, $\pi$ is denoted by dotted red curve, and $\pi$ overlaps with $\pi_1$ from point $a_1$ to $b_1$ and with $z$ from points $a_2$ to $b$. Let $a,a_1$-subcurve of $\pi$ (resp., $\pi_1$) be $\pi'$ (resp., $\pi'_1$) and $b_1,b$-subcurve of $\pi$ (resp., $\pi_1$) be $\pi''$ (resp., $\pi''_1$). If $\pi$ is an extended curve of $\pi_1$, then $\pi'_1$ and $\pi''_1$ are convex curves in $\Psi_{\pi'_1,\pi'}$ and $\Psi_{\pi''_1,\pi''}$, respectively, and $\pi$ is an isometric curve in $\Psi_{\pi,z}$.}
    \label{fig:extended}
\end{figure}

Consider a fixed representation $\Psi$ of a string graph $G$. Let $\Psi_{\pi_1,z}$, where $z \in \{\pi_2,L,R \}$, be the geometric robber territory.  See Figure~\ref{fig:extended} for an illustration. Let $\pi_1$ be an isometric curve relative to $\Psi_{\pi_1,z}$. Then we say that a curve $\pi \in \Psi_B$ is an \textit{extended curve} of $\pi_1$ if in the region $\Psi_{\pi',\pi'_1}$ bounded by any internally disjoint curves $\pi' \subseteq \pi$ and $\pi'_1 \subseteq \pi_1$, the curves $\pi_1'$ and $\pi'$ are a convex curve and an isometric curve relative to $\Psi_{\pi',\pi'_1}$, respectively; and in the region $\Psi_{\pi,z}$, the curve $\pi$ is an isometric curve relative to $\Psi_{\pi,z}$. We also say that the path $P$ related to $\pi$ is an extended path of the path $P_1$ related to $\pi_1$. We would like to note here that there might be multiple extended paths of an isometric path. In this section, we show that if 5 cops are guarding $N[P_1]$, then using at most 4 extra cops (total 9 cops), we can reduce the geometric robber territory to either $\Psi_{\pi',\pi'_1}$ or to $\Psi_{\pi,z}$. We have the following lemma.

\begin{lemma}\label{L:extendPath}
 Consider a fixed representation $\Psi$ of a string graph $G$. Let $\Psi_{\pi_1,z}$, where $z \in \{\pi_2,L,R \}$, be the geometric robber territory and let $\pi_1$ be an isometric curve relative to $\Psi_{\pi_1,z}$ guarded by 5 cops. If $\pi_1$ is not a convex curve in $\Psi_{\pi_1,z}$, then we can find an extended curve $\pi$ of $\pi_1$, and  restrict the geometric robber territory to either $\Psi_{\pi,z}$ or to $\Psi_{\pi',\pi'_1}$ where $\pi' \subseteq \pi$ and $\pi'_1 \subseteq \pi_1$, using at most 4 extra cops.
\end{lemma}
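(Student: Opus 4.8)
The plan is to take the isometric curve $\pi_1$ (related to isometric path $P_1$) which is \emph{not} convex relative to $\Psi_{\pi_1,z}$, and construct an extended curve by deflecting $\pi_1$ locally around the vertex that witnesses the failure of convexity. Since $\pi_1$ is not convex, there exists a vertex $x$ in the territory with $d(u_0,x)=\ell-1$ and $u_\ell \in N(x)$; this is precisely a ``shortcut'' vertex that lets the robber approach $P_1$ from the ``forbidden'' side. First I would locate such a witness and, using the string representation, find a curve $\pi$ that agrees with $\pi_1$ on two subcurves (an initial segment $\pi_1'$ from $a$ up to some point $a_1$, and a terminal segment $\pi_1''$ from $b_1$ to $b$, as in Figure~\ref{fig:extended}) but reroutes through the strings realizing these shortcut vertices in the middle. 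The point is to choose $\pi$ so that it is isometric relative to $\Psi_{\pi,z}$ while simultaneously making $\pi_1$ \emph{convex} on each side region $\Psi_{\pi_1',\pi'}$ and $\Psi_{\pi_1'',\pi''}$; this is exactly the definition of an extended curve.

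The key steps, in order, are as follows. (1) Establish existence of the extended curve: starting from $\pi_1$, iteratively absorb the witnessing shortcut vertices into the curve, pushing the curve toward the robber territory; each such modification strictly reduces the region on the convex-violating side, and since $\Psi$ has finitely many segments and faces, this process terminates in a curve $\pi$ with the required isometricity in $\Psi_{\pi,z}$ and convexity of the subcurves $\pi_1'$, $\pi_1''$ on the two bounded subregions. Here I would invoke Lemma~\ref{subpath} to guarantee that passing to subcurves preserves convexity relative to the restricted representation, and Lemma~\ref{L:subpath1} for the isometric subcurves. (2) Deploy the extra cops: by Lemma~\ref{lem:uniqueIsometric}, four cops can guard $N[P]$ for the isometric (in fact the side subcurves are convex) path related to $\pi$. (3) Use the topological splitting: since $\pi_1$ and $\pi$ share endpoints $a,b$ (after accounting for the overlapping subcurves $\pi'$, $\pi''$) and $\pi$ lies in the region bounded by $\pi_1$ and $z$, Lemma~\ref{L:subpath2} tells us that the robber region splits into arc-connected components each bounded by a subcurve of $\pi_1$ (or $z$) and a subcurve of $\pi$. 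With both $\pi_1$ (guarded by the original $5$ cops) and $\pi$ (guarded by the new $4$ cops) guarded, the robber cannot cross either, so by Observation~\ref{O:curves} its territory collapses to a single component, which is either $\Psi_{\pi,z}$ or one of the side regions $\Psi_{\pi',\pi_1'}$, $\Psi_{\pi'',\pi_1''}$.

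The main obstacle I expect is \textbf{(1)}, the existence and termination of the curve-extension construction, carried out purely in the geometry of the string representation rather than in the graph. The delicate point is to reroute $\pi_1$ so that it remains related to \emph{some} isometric path (monotonicity of the curve, per the paper's convention, must be maintainable) while the two flanking subcurves become genuinely convex relative to their bounded subregions; one must argue that deflecting around one shortcut vertex does not create a new convexity violation elsewhere that cannot be subsequently repaired, and that each repair strictly shrinks a finite measure (number of faces, or segments, enclosed on the offending side). I would handle this by a potential-function/monotonicity argument on the finite set of faces between $\pi_1$ and $z$, showing each deflection moves the curve strictly toward $z$ and hence the process halts. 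A secondary subtlety is bookkeeping the cop count: the $4$ extra cops must suffice uniformly across the three possible resulting regions, which follows because Lemma~\ref{lem:uniqueIsometric} applies to whichever convex subcurve survives, while the original $5$ cops continue guarding $\pi_1$; once the region is reduced, the appropriate team can be freed for the next iteration.
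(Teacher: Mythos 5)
Your overall framework (construct an extended curve, guard both curves, then split the territory using Lemma~\ref{L:subpath2} and Observation~\ref{O:curves}) matches the paper's, but there is a genuine gap at the heart of the lemma: the cop count. In your step (2) you invoke Lemma~\ref{lem:uniqueIsometric} to guard $N[P]$ with four cops, where $P$ is the path related to the new curve $\pi$. But Lemma~\ref{lem:uniqueIsometric} applies only to \emph{convex} paths, and $P$ is merely \emph{isometric} relative to $\Psi_{\pi,z}$; while the robber can still roam all of $\Psi_{\pi_1,z}$, nothing makes $P$ convex, and the convexity you cite (of the subcurves $\pi_1'$, $\pi_1''$ of the \emph{old} curve $\pi_1$) only becomes usable \emph{after} the robber has been confined to a side region. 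So guarding $N[P]$ requires five cops by Proposition~\ref{lem:shortr}, and your argument delivers $5+5=10$ cops in total, i.e., five extra, not four. The ``at most $4$ extra cops'' claim is precisely the nontrivial content of the lemma, and your proposal does not establish it.

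The paper closes this gap with a cop-sharing argument that depends on a very specific construction of $\pi$, which your iterative-deflection sketch does not provide. The paper takes the \emph{least} index $i$ witnessing non-convexity, picks a witness $v$ (with $d(u_0,v)=i-1$ and $u_i \in N(v)$) whose string meets $\psi(u_i)$ at a carefully chosen intersection point, and sets $\pi = \pi_q \cup \pi_r \cup \pi_s$: an isometric curve related to an isometric $u_0,v$-path, a piece of $\psi(u_i)$, and the terminal subcurve of $\pi_1$. Crucially, the resulting path $P = v_0,\ldots,v_k$ satisfies $v_\ell = u_\ell$ for all $\ell \geq i$, and by minimality of $i$ the $u_0,u_i$-prefix of $P_1$ is convex. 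Guarding an isometric path uses a sheriff plus deputies at positions $j-2$, $j-1$, $j+1$, $j+2$; the cop at position $j+2$ (the ``special deputy'') is needed only where convexity fails, i.e., only on the vertices $u_i,\ldots,u_k$, which are common to both paths. Hence a single special deputy serves both teams simultaneously, and $5+4=9$ cops guard both curves. Your construction --- iteratively ``absorbing'' shortcut vertices with a potential-function termination argument --- not only leaves existence and termination unproven (as you acknowledge), it also loses the structural properties ($P$ and $P_1$ sharing the suffix beyond index $i$, and convexity of $P_1$'s prefix) on which the sharing argument rests; so even if your process terminated, it would not yield the four-extra-cops bound.
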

\begin{proof}
In the first part of this proof, we show how to find an extended curve of $\pi_1$, if it exists. Let $P_1$ and $P_2$ be the isometric paths related to the $a,b$-curves $\pi_1$ and $\pi_2$, respectively; and let $u_0$ and $u_k$ be the endpoints of $P_1$ and $P_2$. If $z\in \{L,R\}$, then let $P_2 = \phi$. If there is no $u_0,u_k$-path in $G_{\pi,x}$ other than $P_1$ and $P_2$, then we say that the curve $\pi$ cannot be extended. Note that here $\pi_1$ is a convex path in $\Psi_{\pi_1,x}$.

Let $G_B$ be the graph corresponding to the geometric robber territory $\Psi_B = \Psi_{\pi_1,x}$. If $P_1$ is a convex path relative to $\Psi_{\pi_1,x}$, then we find a shortest $u_0,u_k$-path $P$ in $G_B$ other than $P_1$ and $P_2$. Observe that $P$ is an isometric path relative to $\Psi_{\pi_1,x}$, and also an extended path of $P_1$. Here, we can simply free one cop from $P_1$ (since $P_1$ is convex path relative to $\Psi_{\pi_1,x}$) and use it along with 4 new cops to guard $N[P]$, and we are done. Thus, we can fix any $a,b$-curve $\pi \subseteq \Psi_B $ related to $P$ as an an extended curve of $\pi_1$.

If $P_1$ is not a convex path relative to $\Psi_{\pi_1,x}$, then we do the following. Find the least $i$ such that there is a vertex $x \in V(G_B) \setminus V(P_1 \cup P_2)$ with $d(u_0,x) = i-1$ and $u_i \in N(x)$. Now consider the string $\psi(u_i)$ in $\Psi_{\pi_1,x}$. Let $p_u$ be the intersection point of strings $\psi(u_i)$ and $\psi(u_{i-1})$ and let $p_b$ be the intersection point of strings $\psi(u_i)$ and $\psi(u_{i+1})$, in the curve $\pi_1$. We note here that these intersection points $p_u$ and $p_b$ are well defined since we are considering monotone curves. Next, we define sub-curves $\pi_u, \pi_p,$ and $\pi_b$ of the curve $\psi(u_i)$. Let $\pi_p = \pi_1 \cap \psi(u_i)$. Let $\pi_u$ be the maximal continuous sub-curve of $\psi(u_i)$ such that $\pi_u \cap \pi_1 = p_u$. Similarly, let $\pi_b$  be the maximal continuous sub-curve of $\psi(u_i)$ such that $\pi_b \cap \pi_1 = p_b$. We note here that one or both of $\pi_u$ and $\pi_b$ might be single points. 

Let $X \subseteq V(G_B) \setminus V(P_1 \cup P_2)$  such that $X = \{ x \ |\ d(u_0,x) = i-1\ \text{and}\ u_i \in N(x)$\}.
Let $X_{\epsilon} \subseteq X$, where $\epsilon \in \{u,p,b\}$, such that $X_{\epsilon} = \{ x \ | \ \psi(x) \cap \pi_{\epsilon}  \neq \phi \}$. Now we find a suitable vertex $v\in X$ in the following manner. See Figure~\ref{fig:extend} for an illustration.

\begin{figure}
\centering
\begin{subfigure}{.33\textwidth}
  \centering
  \includegraphics[width=.8\linewidth]{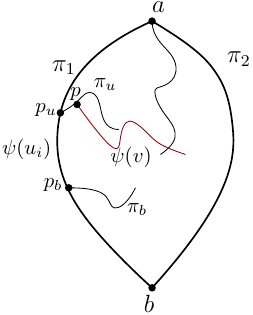}
  \caption{Case 1. Here $p'=p_u$.}
  \label{fig:extendPu}
\end{subfigure}%
\begin{subfigure}{.33\textwidth}
  \centering
  \includegraphics[width=.8\linewidth]{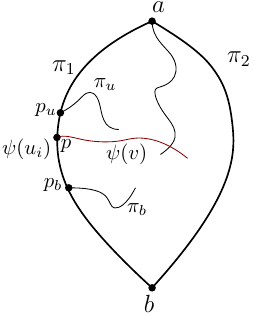}
  \caption{Case 2. Here $p'=p$.}
  \label{fig:extendP}
\end{subfigure}
\begin{subfigure}{.33\textwidth}
  \centering
  \includegraphics[width=.8\linewidth]{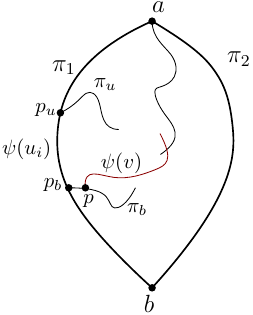}
  \caption{Case 3. Here $p'=p_b$.}
  \label{fig:extendPb}
\end{subfigure}
\caption{An illustration for extending an isometric curve. Here $\pi_1$ is an isometric curve relative to $\Psi_{\pi_1,\pi_2}$. The curve $\psi(v)$ is displayed in red and $p$ is an intersection point of $\psi(v)$ and $\psi(u_i)$.} 
\label{fig:extend}
\end{figure}

\begin{enumerate}
    \item If $X_u$ is not empty, then we find a vertex $v \in X_u$ such that an intersection point of the string $\psi(v)$ and curve $\pi_u$ is closest to the point $p_u$ along the curve $\pi_u$. We mark this intersection point as $p$ and we set $p' = p_u$. See Figure~\ref{fig:extendPu} for an illustration.
    
    \item If $X_u$ is empty and $X_p$ is not empty, then we find a vertex $v \in X_p$ such that an intersection point of the string $\psi(v)$ and curve $\pi_p$ is closest to the point $p_u$ along the curve $\pi_p$. We mark this intersection point as $p$ and we set $p' = p$. See Figure~\ref{fig:extendP} for an illustration.
    
    \item If $X_u$ and $X_p$ are empty, then we find a vertex $v \in X_b$ such that an intersection point of the string $\psi(v)$ and curve $\pi_b$ is closest to the point $p_b$ along the curve $\pi_b$. We mark this intersection point as $p$ and we set $p' = p_b$. See Figure~\ref{fig:extendPb} for an illustration.
\end{enumerate}

Now consider an isometric $u_0,v$-path $P_q$ relative to $\Psi_{\pi_1,x}$, and let $\pi_q$ be an isometric $a,p$-curve related to path $P'$. Moreover, let $\pi_r$ denote the $p,p'$-curve along $\psi(u_i)$, and $\pi_s$ denote the $p',b$-subcurve of the curve $\pi_1$ We compose the curve $\pi= \pi_q \cup \pi_r \cup \pi_s$. Observe that $\pi$ is an extended curve of $\pi_1$, and $P$ (path related to $\pi)$ is an extended path of $P_1$.  

In the next part of the proof, we show that if five cops are guarding the closed neighborhood of $P_1$ (with $\Psi_{\pi_1, \pi_2}$ being the geometric robber territory), then a total of nine cops can guard both $N[P]$ and $N[P_1]$ with $\Psi_{\pi_1, \pi_2}$ being the geometric robber territory. 

Recall that if $Q$ is a convex $w_0,w_{\ell}$-path, then four cops can guard $N[Q]$ by the sheriff guarding the path $Q$ and the deputies being at vertices $w_{j-2},w_{j-1}$ and $w_{j+1}$, when the sheriff is at vertex $w_j$ (due to Lemma~\ref{lem:uniqueIsometric} and its proof). Similarly, five cops can guard the closed neighbourhood of an isometric $w_0,w_{\ell}$-path $Q$, by the sheriff guarding the path $Q$ and the deputies being at vertices $w_{j-2},w_{j-1}, w_{j+1}$ and $w_{j+2}$, when the sheriff is at vertex $w_j$. Let us denote the deputy that moves to vertex $w_{j+2}$ when the sheriff moves at $w_j$ as the \textit{special deputy}.

Consider the paths $P$ and $P_1$. Note that both paths have the same length. Let the vertices of the path $P$ be denoted by $v_0, \ldots, v_k$. Note that $u_0 = v_0$ and for $\ell \geq i$, $u_{\ell} = v_{\ell}$. Now, we have that five cops are guarding $P_1$ (recall that $P_1$ is the isometric path corresponding to the isometric curve $\pi_1$). The main idea is that in the $u_0,u_i$-subpath of path $P_1$, only four cops are required (since it is a convex path), and if we keep the special deputy at $v_{j+2}$ when the sheriffs are at $u_j$ and $v_j$, then it serves as the special deputy for both paths $P$ and $P_1$ (because in path $P_1$, the special deputy is required only at vertices from $u_i, \ldots, u_k$, and note that these vertices are same as vertices $v_i, \ldots, v_k$). The extra four cops move on $P$ such that the sheriff guards $P$ and the deputies are at $v_{j-2}, v_{j-1}$, and $v_{j+1}$, when the sheriff is at $v_j$. When the sheriff successfully guards $P$, let it be at the vertex $v_{\ell}$. If ${\ell}\geq i$, then we have already achieved the goal as the special deputy is already at $v_{{\ell}+2}$ (since $v_{{\ell}+2} = u_{{\ell}+2}$ in this case). Otherwise ${\ell}<i$, and in this case, the special deputy is at $u_{{\ell}+2}$. Now, the special deputy starts moving towards $u_i$ irrespective of the moves of the sheriffs. Once it reaches $u_i$, it moves aiming to reach $v_{{\ell}+2}$ when the deputies are at $u_{\ell}$ and $v_{\ell}$. Once the special deputy reaches such a vertex, we have the desired state.

Now, if $\R$ is in the region $\Psi_{\pi,z}$, then we can free the cops from $\pi_1$ and $\R$ is restricted to $\Psi_{\pi,z}$. Otherwise $\R$ is restricted to the region $\Psi_{\pi',\pi'_1}$ where $\pi' \subseteq \pi$ and $\pi'_1 \subseteq \pi_1$.
\end{proof}

\subsection{Algorithm for String Graphs}
In this section, we show that for a string graph $G$, $\mathsf{c}(G) \leq 13$, by giving a winning strategy using 13 cops for any string graph. Let $\Psi$ be a fixed representation of $G$. First, we provide the intuition for our strategy. We define three "favorable" game states. Then we show that whenever we are in a favorable game state, 13 cops can force the game to another favorable game state such that the geometric robber territory gets reduced.

Let $\mathcal{R}$ be restricted to $\Psi_B$. Moreover, let $u$ and $v$ be two distinct vertices in $G_B$. For our strategy, first we define three game states, \textit{state 1}, \textit{state 2}, and \textit{state 3} as follows:
\begin{enumerate}
    \item \textit{State 1}: Let $u$ be a top-most and $v$ be a bottom-most vertex in $G_B$. Then five cops are guarding the closed neighbourhood of an isometric $u,v$-path $P$ in $G_B$. Note that, in $\Psi_B$ this restricts $\R$ to either $\Psi_{\pi,L}$ or $\Psi_{\pi,R}$, where $\pi$ is a curve related to $P$ (Observation~\ref{O:path}).
    
    \item \textit{State 2}: The region $B$ is bounded by two internally disjoint curves $\pi_1$ and $\pi_2$ such that $\pi_1$ is a convex curve in $\Psi_{\pi_1,\pi_2}$ and $\pi_2$ is an isometric curve in $\Psi_{\pi_1,\pi_2}$, both relative to $\Psi_{\pi_1,\pi_2}$. Then five cops are guarding $\pi_2$ and four cops are guarding $\pi_1$ (total 9 cops).
    
    \item \textit{State 3}: Let $x$ be a vertex in $G$ such that $G_B$ is a connected component of $G - x$. If a cop is occupying the vertex $x$ and $\R$ is in $G_B$, then observe that $\R$ is restricted to $G_B$ (Observation~\ref{O:vertex}). Let $u$ be a top-most vertex and $v$ be a bottom-most vertex in $G_B$, and $P$ be an isometric $u,v$-path relative to $T = V(G_B)$. Then one cop is occupying vertex $x$ and five cops are guarding $N[P]$. Moreover, $\mathcal{R}$ and $\psi(x)$ are on the same side of each curve $\pi$ related to $P$.
\end{enumerate}
State 1, state 2, and state 3 are referred to as the \textit{safe states}. We have the following lemma, which is central to our algorithm.
\begin{lemma}\label{lem:reduce}
Consider a fixed representation $\Psi$ of a string graph $G$. Let $\R$ be in a region $B$ of $\Psi$, and $\Psi_B$ be the geometric robber territory. Let the game be in a safe state $S$. Then 13 cops can force the game to a safe state $S'$ and the geometric robber territory to $\Psi_{B'} \subset \Psi_B$, in a finite number of moves.
\end{lemma}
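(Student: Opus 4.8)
The plan is to perform a case analysis on which safe state $S$ we start in, and in each case describe how the $13$ cops act to reach a new safe state $S'$ with strictly smaller geometric robber territory. The key resources are: Observation~\ref{O:path} (five cops restrict $\R$ to one side of a top-bottom curve), Lemma~\ref{L:extendPath} (given $5$ cops guarding a non-convex isometric curve, $4$ extra cops let us extend it and split the territory), Lemma~\ref{lem:uniqueIsometric} (four cops guard the neighbourhood of a convex curve), and the topological splitting Lemma~\ref{L:subpath2}. The unifying idea, as sketched after Proposition~\ref{lem:shortr}, is that two internally disjoint isometric curves sharing endpoints $a,b$ bound a region, and introducing a third curve through that region cuts the robber territory; our improvement is that one of the two guarded curves will always be \emph{convex}, so it costs only $4$ cops rather than $5$, keeping the total at $13$.

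First I would handle \textit{State 1}. Here five cops guard $N[P]$ for a top-bottom isometric $u,v$-curve $\pi$, and $\R$ is confined to, say, $\Psi_{\pi,R}$. If $\pi$ is already convex relative to $\Psi_{\pi,R}$, I free it down to four cops (Lemma~\ref{lem:uniqueIsometric}) and, inside the region to the right, pick a \emph{new} top-bottom isometric curve in $G_{\pi,R}$, guard it with five fresh cops, and then apply the splitting argument of Lemma~\ref{L:subpath2} with these two curves to push $\R$ into a strictly smaller subregion, landing in \textit{State 2}. If $\pi$ is \emph{not} convex, I invoke Lemma~\ref{L:extendPath}: with $4$ extra cops ($9$ total) we extend $\pi$ to $\pi^{*}$ and restrict $\R$ either to $\Psi_{\pi^{*},R}$ (still a single-curve bound, i.e.\ \textit{State 1} with a smaller region) or to $\Psi_{\pi',\pi_1'}$, a two-curve region in which one of the bounding curves is convex --- that is \textit{State 2}. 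In every subcase the cop count stays within $13$ and the territory shrinks.

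Next I would treat \textit{State 2}, which I expect to be the workhorse case: the region $B$ is bounded by internally disjoint $a,b$-curves $\pi_1$ (convex, $4$ cops) and $\pi_2$ (isometric, $5$ cops), leaving $4$ cops spare. Inside $G_B$ I find an isometric $a,b$-path $P_3$ distinct from $P_1,P_2$ and guard $N[P_3]$ with $5$ cops ($4+5+4 = 13$ in use momentarily). By Lemma~\ref{L:subpath2} the third curve $\pi_3$ partitions $B$ into arc-connected components, each bounded by a sub-curve of $\pi_3$ together with a sub-curve of either $\pi_1$ or $\pi_2$; $\R$ is trapped in one such component. Using Lemma~\ref{L:subpath1} and Lemma~\ref{subpath}, the relevant sub-curves remain isometric or convex relative to the smaller region, so I can release the team guarding the curve $\R$ has moved away from, freeing cops and re-establishing \textit{State 2} (or, if the freed curve was bounded by a cut vertex structure, \textit{State 3}) on a strictly smaller $\Psi_{B'}$. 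The delicate point is ensuring one of the two surviving bounding curves is convex: if $\pi_3$ happens to be non-convex relative to the new region, I first run Lemma~\ref{L:extendPath} on it to extend it into a convex-bounded configuration before releasing a team.

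Finally, \textit{State 3} (a cut vertex $x$ occupied, $\R$ confined to component $G_B$, five cops on a top-bottom isometric $P$) reduces to the \textit{State 1} analysis carried out inside $G_B$, with the single cop on $x$ held in reserve the whole time; since that cop never moves, only the remaining twelve are active, comfortably within budget, and the territory again shrinks. The main obstacle across all cases is the bookkeeping of Lemma~\ref{L:extendPath}: I must verify that after the extend-and-split operation, the surviving configuration genuinely matches one of the three safe-state definitions (in particular that convexity/isometry of the sub-curves is preserved relative to the \emph{new} geometric robber territory, which is exactly what Lemma~\ref{L:subpath1}, Lemma~\ref{subpath}, and Lemma~\ref{L:subpath2} are designed to guarantee), and that the cop count never exceeds $13$ even during the transient overlap when a new team is being deployed before an old one is freed. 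Establishing that $\Psi_{B'} \subset \Psi_B$ strictly in each transition then follows because each new curve contains an interior point of the old region, so some vertices of $G_B$ become inaccessible.
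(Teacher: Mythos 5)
Your skeleton (case analysis on the three safe states, extension of curves, convexity to save a cop) matches the paper, but there are two genuine gaps. The first is the cop count in your state-2 handling. You deploy five \emph{fresh} cops on an arbitrary third isometric $a,b$-path $P_3$ while keeping $4$ on the convex curve and $5$ on the isometric curve: that is $4+5+5=14$ cops, not the ``$4+5+4=13$'' you wrote, and $14$ is exactly the bound the paper is trying to beat. The whole point of Lemma~\ref{L:extendPath} is that the third curve is \emph{not} arbitrary: the extension procedure constructs a new path $P$ that coincides with the old path $P_1$ from some index $i$ onward ($u_\ell=v_\ell$ for $\ell\geq i$), so the ``special deputy'' can be shared between the two teams and both curves are guarded by only $9$ cops in total, keeping the global count at $4+9=13$. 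Accordingly, the paper's proof of Lemma~\ref{lem:reduce} always produces the third curve by \emph{extending the currently guarded isometric curve} via Lemma~\ref{L:extendPath}; your plan of first guarding an independent $P_3$ and only invoking Lemma~\ref{L:extendPath} as a fallback ``if $\pi_3$ happens to be non-convex'' both has the logic backwards and cannot be afforded within the budget (the extension itself needs $4$ additional cops on top of whatever is already deployed).

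The second gap is the missing non-extendability case. Lemma~\ref{L:extendPath} (and your use of it) presupposes that some $u,v$-path other than the guarded one(s) exists. When it does not --- the ``curve cannot be extended'' situation --- the paper proves that the robber's component attaches to $P$ (or $P\cup P'$) through a single cut vertex $x$, places one cop on $x$ (Observation~\ref{O:vertex}), frees all the other cops, and rebuilds a top-bottom isometric path inside the component; this is precisely why safe state 3 exists at all. Your proposal never addresses existence of the third path/extension in states 1 and 2, so the strategy as written gets stuck exactly there. Relatedly, your state-3 treatment (``reduces to the state-1 analysis with the cop on $x$ held in reserve'') skips the transition logic that the paper needs: whether $x$ and $\R$ end up on the same side of the new curve determines whether the cop on $x$ can be freed and whether you land in state 1, 2, or 3, and there is a further sub-case when the curve inside $G_B$ cannot be extended and one must examine whether $x$ is adjacent to the new component. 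A smaller issue of the same flavor: in your state-1 convex sub-case, the new ``top-bottom'' curve chosen inside $G_{\pi,R}$ need not share endpoints with $\pi$, so the region between the two curves is not a region of the form $\Psi_{\pi_1,\pi_2}$ bounded by internally disjoint $a,b$-curves, and hence does not match the state-2 definition; the paper keeps the endpoints $u,v$ fixed throughout precisely to preserve this structure.
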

\begin{proof}
Depending upon the state $S$ of the game, we do the following: 
\begin{enumerate}
    \item \textbf{S = state 1}: Let $u$ be a top-most and $v$ be a bottom-most vertex in $G_B$, and $P$ be the isometric $u,v$-path such that $N[P]$ is guarded by 5 cops. Let $\pi$ be a curve related to path $P$ and let $\pi$ defines $B$. Observe that the geometric robber territory is either $\Psi_{\pi,L}$ or $\Psi_{\pi,R}$. Without loss of generality, let us assume that $\R$ is restricted to the right of $\pi$ and hence $\Psi_{\pi,R}$ is the geometric robber territory. We extend the curve $\pi$ in $\Psi_{\pi,R}$ and let $\pi'$, related to a path $P'$, be an extended curve of $\pi$. Now, one of the following scenarios is possible:
    \begin{enumerate}
        \item Curve $\pi$ cannot be extended: It is possible only if there is no $u,v$-path in $G_B$ other than $P$. Let $\R$ be in a connected component  $G'$ of $G_B-P $. In this case, we claim that there is a unique vertex $x \in V(P)$ such that $x$ has a neighbor in $G'$. For contradiction, assume that there is some other vertex $y \neq x$ in $P$ such that $y$ has some neighbor in $G'$. Then consider the path $Q$ formed by the vertices of $u,x$-path along $P$, followed by a shortest $x,y$-path in $G'\cup \{u,v\}$, followed by the $y,v$-path along $P$.  Here $Q$ is a path other than $P$, and thus we have a contradiction. Thus $x$ is a cut vertex such that $G_B - x$ gives $G'$ as a component. 
        
       We guard $x$ using one cop and free other cops from $P$. Now, find a top-most vertex $u'$ and a bottom-most vertex $v'$ in $G'$ and an isometric $u',v'$-path in $G'$. Now, consider a top-down curve $\pi'$ corresponding to path $P'$ and guard $\pi'$ using five cops. If $\mathcal{R}$ and $x$ are on the same side of $P'$, then we are in the safe state 3. If $\mathcal{R}$ and $x$ are on opposite sides of $\pi'$, then we can free cop on $x$, and we are in the safe state 1. In both cases, at least the segments corresponding to the vertices of $V(P)-\{x\}$ will be removed from the geometric robber territory.
        
        \item $\mathcal{R}$ is on the same side of $\pi$ and $\pi'$: Since $\pi'$ is a top-bottom curve and $\pi'$ is guarded by five cops, we can free the cops on curve $\pi$. Hence, the geometric robber territory is now $\Psi_{\pi',R}$ (since $\R$ is in the right of both $\pi$ and $\pi'$). Also,  $\Psi_{\pi',R} \subset \Psi_{\pi,R}$ since the region bounded between $\pi$ and $\pi'$ is in $\Psi_{\pi,R}$ but not in $\Psi_{\pi',R}$.
        
        \item $\mathcal{R}$ is in the region bounded by two curves $\pi_1$ and $\pi_1'$ such that $\pi_1 \subseteq \pi$ and $\pi_1' \subseteq \pi'$: By the definition of extended curve, we know that $\pi_1$ is a convex curve and $\pi_1'$ is an  isometric curve, both relative $\Psi_{\pi_1,\pi_1'}$. Hence using Lemma~\ref{L:extendPath}, we can restrict the geometric robber territory to $\Psi_{\pi_1,\pi_1'}$ using at most 9 cops. Hence we are in the safe state 2. For the sake of simplicity, to prove that the geometric robber territory decreases in this case, we prove it for state 2, and whenever this case occurs, we execute this Lemma again for state 2.
    \end{enumerate}
    
    \item \textbf{S = state 2}: 
    Let $B$ be bounded by two internally disjoint $a,b$-curves $\pi$ and $\pi'$, and $\Psi_{\pi,\pi'}$ be the geometric robber territory. Let $\pi$ and $\pi'$ are related to $u,v$-paths $P$ and $P'$, respectively. Moreover, let $\pi$ be a convex curve in $ \Psi_{\pi,\pi'}$ and $\pi'$ be an isometric curve in $\Psi_{\pi,\pi'}$, both relative to $\Psi_{\pi,\pi'}$. Also, four cops are guarding $\pi$ and five cops are guarding $\pi'$.
    
    Now, if the curve $\pi'$ can be extended, then we extend the curve $\pi'$ using Lemma~\ref{L:extendPath}. Let $\pi_1$ be an extended curve of $\pi'$. Also, let $P_1$ be the path related to $\pi_1$ is an extended curve of $\pi'$. Now, using Lemma~\ref{L:extendPath}, we can guard both $\pi_1$ and $\pi'$ using at most nine cops. Now, if $\R$ is restricted in $\Psi_{\pi,\pi_1}$, then we can free the cops guarding $\pi'$, and we reach safe state 2. Note that the region bounded by curves $\pi'$ and $\pi_1$ is removed from the geometric robber territory. If $\R$ is restricted in $\Psi_{\pi'',\pi_1'}$ such that $\pi'' \subseteq \pi'$ and $\pi_1' \subseteq \pi_1$, then note that we can free the cops guarding $\pi$. Observe that the curve $\pi$ is removed from the geometric robber territory in this case.

    Suppose we cannot extend the curve $\pi'$ (that is, there is no $u,v$-path in $G_B$ other than $P$ and $P'$). Then observe that the vertices of the connected component of $G_B-(P\cup P')$ containing $\mathcal{R}$ can be connected to only one vertex $x$ of $P\cup P'$ (Proof is similar to the argument in case 1(a)). We move one cop to vertex $x$ and free all other cops. Now, we are in a situation similar to that of step 1(a). Hence we follow the same steps. Note that we also reduce the geometric territory of $\R$ in this step.

    \item \textbf{S = state 3}: Let $x$ be a vertex such that $G_B$ is a connected component of $G-x$. Consider the representation $\Psi' \subset \Psi$ such that $\Psi' = \{ \psi(u) ~|~  u\in G_B \}$.  
    Let $u$ and $v$ be a top-most and bottom-most vertex of $G_B$, respectively. Also, let $P$ be the isometric $u,v$-path such that $N[P]$ is guarded by five cops, and one cop is occupying the vertex $x$. 
    Moreover, both $\R$ and $x$ are on the same side of $P$. Without loss of generality, let us assume that they are on the right of $P$. Since $x$ is occupied by a cop and $N[P]$ is guarded by cops, observe that the geometric robber territory is $\Psi'_{\pi,R}$, where $\pi$ is a curve related to $P$. Now, if the curve $\pi$ can be extended, then we extend the curve $\pi$ in $\Psi'_{\pi,R}$ (using Lemma~\ref{L:extendPath}) and let  $\pi_1$ be the extended curve of $\pi$. Also let $P_1$ be the path related to $\pi_1$.
    
    If $\mathcal{R}$ and $x$ are on the same side of $\pi_1$, then we can free cops from $P$, and we are in the safe state 3. Here, the geometric robber territory is reduced by the region bounded between $\pi$ and $\pi_1$.
    
    If $\mathcal{R}$ is in the region bounded by two internally disjoint curves $\pi'$ and $\pi'_1$ such that $\pi' \subseteq \pi$ and $\pi'_1 \subseteq \pi_1$, then we are in the safe state 2 (by the definition of extended curves). Now, we free the cop guarding $x$. Here, the geometric robber territory is reduced by some segments of $\psi(x)$, at least.

    If the curve $\pi$ cannot be extended in $\Psi'_{\pi,R}$, then there exists a vertex $y \in V(P)$ such that vertices in $G_B-y$ gives a connected component $G_{B'}$ containing $\mathcal{R}$. If $x$ is not adjacent to any vertex in $V(G_{B'})$, then we are in a situation similar to 1(a), and we follow the same steps. If $x$ is adjacent to some vertex in $V(G_{B'})$, then We place one cop on $y$ and free other cops from $P$. Now, we find a top-most vertex $u'$ and bottom-most vertex $v'$ in $G_{B'}$ and find an isometric $u',v'$-path $P_1$ in $G_{B'}$. Now, five cops guard $N[P']$. Consider a top-bottom curve $\pi'$ related to $P'$. Now, either $x$ and $\mathcal{R}$ lie on the same side of $\pi'$ or  $y$ and $\mathcal{R}$ lie on the same side of $\pi'$. In both cases, we are in the safe state 3. Also, observe that each segment $s$ such that $s$ is a segment of path $P$ and $s$ is not a segment of string $\psi(y)$ is reduced from the geometric robber territory. Hence the geometric robber territory reduces in this step.
\end{enumerate}
This completes the proof of our lemma.
\end{proof}

Now we prove the main result of this section.
\ThmString*
\begin{proof}
We give a cop strategy to prove our claim. Consider a fixed representation $\Psi$ of a string graph $G$. We first show that at most 13 cops can force the robber to a safe state. Initially, let the robber territory be $\Psi$ and $G_B=G$. Cops find a top-most vertex $u$ and a bottom-most vertex $v$ in $G_B$ and find an isometric $u,v$-path $P$ in $G_B$. Now, five cops guard $N[P]$. This restricts the robber either to the left or to the right of $P$. Now we are in the safe state 1.

After this, until the robber is captured, we use Lemma~\ref{lem:reduce} to reduce the geometric robber territory. Since we have a finite graph with a finite representation and cops can reduce the geometric robber territory in every iteration of Lemma \ref{lem:reduce} using at most 13 cops, these 13 cops will eventually capture the robber. 
\end{proof}

\section{Fully Active Cops and Robber on Planar Graphs}\label{sec:Planar}
In this section, we show that for a planar graph $G$, $\mathsf{c_a}(G) \leq 4$.  We rather consider the game where only the cops are forced to be active and $\R$ is flexible. In this section, we show that for a planar graph $G$, $\mathsf{c_A}(G) \leq 4$. 
It is easy to see that for a graph $G$, $\mathsf{c}(G) \leq \mathsf{c_A}(G)$. Hence, there exist a planar graph $G$ such that $\mathsf{c_A}(G) \geq 3$. Here, we argue that four active cops have a winning strategy for any planar graph. First, we present a straightforward result.

\begin{lemma}\label{lem:guardPlanar}
Let $P$ be an isometric path of a  graph $G$. Then, two active cops can guard $P$, after a finite number of cop moves.
\end{lemma}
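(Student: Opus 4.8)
The plan is to reduce active two-cop guarding to the shadow strategy underlying Proposition~\ref{res:shortest}, compensating for the fact that an active cop may not stay put by letting the two cops \emph{trade places}. Write $P = v_0,v_1,\ldots,v_k$, and for the current robber position set $D_j = \{w : d(v_0,w)=j \text{ if } j<k,\ \text{and}\ d(v_0,w)\ge j \text{ if } j=k\}$, exactly as in the proofs of \Cref{lem:uniqueIsometric,lem:uniquePlanar}; if $\R\in D_c$ call $v_c$ the \emph{shadow}. I would have the cops maintain the invariant: \emph{immediately after every cop move, one cop (the \textbf{anchor}) occupies the shadow $v_c$, and the second cop occupies an adjacent path vertex} --- namely $v_{c+1}$ when $c<k$, and $v_{k-1}$ when $c=k$.

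I would first argue that this invariant already guards $P$. Suppose $\R\in D_c$ and moves onto a vertex $v_j\in P$. Since one move changes $d(v_0,\cdot)$ by at most one, we have $j\in\{c-1,c,c+1\}\cap\{0,\ldots,k\}$, so $v_j$ lies within distance one of the anchor $v_c$; in fact $v_j$ is the anchor's vertex, the partner's vertex, or a neighbour of the anchor on $P$. Because the cops move immediately after $\R$, the anchor steps from $v_c$ onto $v_j$ and captures $\R$ (and if $v_j$ is already the anchor's or partner's vertex, the capture is immediate). Hence $\R$ cannot enter any vertex of $P$ without being caught, which is exactly guarding.

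The crux is showing the cops can always \emph{restore} the invariant using only active moves, never being forced to stay. Suppose the invariant holds with cops on $\{v_c,v_{c+1}\}$ (the endpoint pairs $\{v_0,v_1\}$ and $\{v_{k-1},v_k\}$ are handled identically), and $\R$ moves to a vertex that is not on $P$, landing in $D_{c'}$ with $c'\in\{c-1,c,c+1\}$. I would respond as follows:
\begin{enumerate}
    \item $c'=c$ (the case where a flexible cop would idle): the two cops \emph{swap}, each moving to the other's vertex; one cop again sits on $v_c$ and the other on $v_{c+1}$.
    \item $c'=c+1$: the anchor moves $v_c\to v_{c+1}$ and the partner moves $v_{c+1}\to v_{c+2}$ (at the top end, $v_{k-1}\to v_k$ and $v_k\to v_{k-1}$).
    \item $c'=c-1$: the anchor moves $v_c\to v_{c-1}$ and the partner moves $v_{c+1}\to v_c$.
\end{enumerate}
Each case consists of two genuine moves to adjacent vertices, so activeness is respected, and the invariant is re-established with the anchor on $v_{c'}$. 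To reach the invariant initially, the cops start on $P$ and march toward the shadow; since the shadow shifts by at most one per round while the cops shift by one, they synchronise in finitely many (at most $k$) moves, the swap of case~(1) again supplying a legal move whenever the shadow is momentarily stationary.

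I expect the only real obstacles to be bookkeeping rather than conceptual: verifying the boundary behaviour of the pair at $v_0$ and $v_k$ (so the partner vertex always exists), and the degenerate case $k=0$, where $P$ is a single vertex $v_0$ and the two cops keep one of them anchored on $v_0$ by oscillating between $v_0$ and any fixed neighbour. Everything else follows from the distance-one reach of the anchor together with the swap trick that makes the "idle'' step of the flexible strategy realizable by two active cops.
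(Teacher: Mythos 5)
Your proposal is correct and takes essentially the same approach as the paper: both arguments simulate the flexible single-cop shadow strategy of Proposition~\ref{res:shortest} with two active cops kept on adjacent vertices of $P$, realizing every ``stay'' move of the classical strategy by having the two cops swap positions and exchange roles (your case~1 swap is exactly the paper's sheriff/deputy switch). Your write-up merely makes explicit the shadow invariant, the endpoint bookkeeping, and the initialization phase that the paper's terse proof leaves implicit.
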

\begin{proof}
Let the two cops be denoted as \textit{sheriff} and \textit{deputy}. The two cops stay on adjacent vertices of $P$. The cops move such that the sheriff can guard $P$ using the strategy to guard $P$ in \CR setting using Proposition~\ref{res:shortest}. At some point during the game, if in the classical game strategy, the sheriff has to stay at a vertex on the cop move, the two cops switch positions and also switch the role of sheriff and deputy. This way, the cop that currently is the sheriff guards $P$.
\end{proof}

Using Lemma~\ref{lem:guardPlanar} and the strategy of Aigner and Fromme~\cite{aigner}, it is easy to see that $\mathsf{c_A}(G) \leq 6$. We use Lemma~\ref{lem:uniquePlanar} and Lemma~\ref{lem:guardPlanar} to improve this bound in the following theorem.

\ThmPlanar*
\begin{proof}
We can use Lemma~\ref{lem:uniquePlanar} and techniques similar to the techniques used for string graphs in Section~3 to show that four active cops have a winning strategy against a flexible robber.
\end{proof}

Since $\mathsf{c_a}(G) \leq \mathsf{c_A}(G)$, we have the following immediate corollary of Theorem~\ref{th:planar}.
\begin{corollary}
Let $P$ be a planar graph. Then $\mathsf{c_a}(G) \leq 4$.
\end{corollary}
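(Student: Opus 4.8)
The plan is to prove Theorem~\ref{th:planar} by adapting the three-state framework developed for string graphs in Section~\ref{sec:string} to the fully active setting, substituting the active-cop guarding lemmas for their classical counterparts. The key observation is that a planar graph comes with a plane embedding that plays exactly the role that the fixed string representation $\Psi$ played above: top-bottom isometric paths separate the plane into a left region and a right region, and two internally disjoint isometric $u,v$-paths bound a region that can serve as the robber territory. So first I would recast the notions of geometric robber territory, top-bottom curve, and the three safe states in terms of the planar embedding, where each ``curve'' is now literally the drawing of a path $P$ in the plane.

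Next I would replace the guarding building blocks. In the string setting, five cops guarded $N[P]$ for an isometric path and four cops guarded $N[P]$ for a convex path; here the active-cop analogues are cheaper and simpler because we guard the path $P$ itself rather than its closed neighbourhood $N[P]$. Specifically, \Cref{lem:guardPlanar} gives that two active cops guard an isometric path $P$, and \Cref{lem:uniquePlanar} gives that a \emph{single} active cop guards a convex path $P$ against a flexible robber. The reason guarding $P$ (not $N[P]$) suffices in the planar case is that in a plane graph the drawing of an isometric $u,v$-path between a top-most and a bottom-most vertex is itself a Jordan-arc separator: any $x$-to-$y$ walk with $x$ on the left and $y$ on the right must pass through a \emph{vertex} of $P$, so blocking $P$ already confines the robber to one side. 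Thus the analogue of state~1 uses two active cops on a top-bottom isometric path, and the analogue of state~2 uses one active cop on the convex path plus two on the isometric path (total three), leaving a fourth cop free. With a fourth cop available to sit on a cut vertex, state~3 is handled as before.

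Then I would reprove the two engine lemmas, namely the analogues of \Cref{L:extendPath} and \Cref{lem:reduce}, in the active regime. The extension argument is identical combinatorially: given an isometric path guarded on one boundary of the current region, I find a shortest detour path and argue, exactly as in \Cref{L:extendPath}, that it is an \emph{extended path}, so that on one boundary segment the old path becomes convex (guardable by one active cop) while the new path is isometric (guardable by two). The cop count is the crucial check: guarding the isometric new path costs two active cops, guarding the now-convex old path costs one, and transferring cops between the two teams must be arranged so the active-movement constraint is never violated during the hand-off. The analogue of \Cref{lem:reduce} then drives the three safe states into one another while strictly shrinking the robber territory, and since the embedding is finite the robber is eventually captured. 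The cop budget at the peak is the state-2-to-state-2 transition: one cop on the convex boundary, two on the isometric boundary being replaced, and during the extension two more on the incoming isometric path, which with careful reuse of a shared cop stays within four.

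The main obstacle I expect is precisely the active-movement bookkeeping during these transitions. In the classical game a cop may \emph{wait}, and the strategy above freely parks cops on cut vertices or holds a guarding team stationary while another team repositions; an active cop can never stand still. \Cref{lem:guardPlanar} already shows how to simulate waiting for a \emph{single} guarded isometric path by pairing two cops who swap roles, and \Cref{lem:uniquePlanar} shows a lone active cop can shadow a convex path by oscillating between $v_i$ and $v_{i-1}$; the delicate part is composing these gadgets when a cop is freed from one team and must migrate to another across several rounds (as in the special-deputy migration of \Cref{L:extendPath}) without ever being forced to pass through the robber's position or to abandon its guard on an off-turn. I would therefore devote the core of the argument to verifying that every ``free a cop / park a cop'' step in the string-graph strategy has an active realization, using the swap trick to fabricate effective waiting wherever the original strategy relied on it, and confirming that the total never exceeds four active cops.
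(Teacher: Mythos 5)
Your proposal is correct and follows essentially the same route as the paper: the paper proves Theorem~\ref{th:planar} (that $\mathsf{c_A}(G)\leq 4$ against a flexible robber) precisely by combining Lemma~\ref{lem:uniquePlanar} and Lemma~\ref{lem:guardPlanar} with the machinery of Section~\ref{sec:string}, and then derives the corollary from the observation $\mathsf{c_a}(G)\leq \mathsf{c_A}(G)$. In fact, your write-up supplies more detail than the paper's own two-sentence proof of the theorem --- notably the four-cop budget check via sharing a cop between the old and new guarded paths during an extension, and the concern about realizing ``waiting'' moves (e.g.\ on a cut vertex) by oscillation in the active setting --- both of which the paper leaves implicit by deferring to the string-graph techniques.
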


\section{Final Remarks and Future Directions}\label{sec:final}
We proved that the cop number of a string graph is at most 13. But currently, we do not know any string graph having cop number at least four. Thus, for the class of string graphs $\mathcal{S}$, $3\leq \mathsf{c}(\mathcal{S}) \leq 13$. One immediate open question is to improve this bound by either giving a strategy for fewer cops or by giving an explicit construction of a string graph having cop number at least four. It might also be interesting to tighten the bounds on the active cop number of planar graphs.

\CR is also well-studied with regard to graph genus. Quillot~\cite{quitor} showed that for a graph $G$ having genus $g$, $\mathsf{c}(G) \leq 2g+3$. He used an ``unfolding'' technique where cops find and guard two isometric paths such that ``removing'' these paths from the graph reduces the genus of the graph by one. Let $g$-\textsf{GENUS STRING} be the class of graphs admitting a string representation on an orientable surface of genus $g$. Gaven\v{c}iak et al.~\cite{gavenciak} also used similar unfolding techniques to show that $\mathsf{c}(g$-$\textsf{GENUS STRING}) \leq 10g+15$. For this purpose, they use 10 cops to unfold a genus by guarding the closed neighborhood of two appropriate isometric paths, and then finally capturing the robber in a genus 0 string graph using 15 cops. Theorem~\ref{th:planar}, along with their unfolding techniques, gives the following immediate corollary.
\begin{corollary}
$\mathsf{c}(g$-$\textsf{GENUS STRING}) \leq 10g+13$.
\end{corollary}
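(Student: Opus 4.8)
The plan is to reuse, essentially verbatim, the unfolding machinery of Gaven\v{c}iak et al.~\cite{gavenciak}, replacing only the final genus-$0$ step by our improved bound from Theorem~\ref{th:string}. Recall that their strategy repeatedly \emph{unfolds} the surface: given a string representation on an orientable surface of genus $g \ge 1$, the cops locate two appropriate isometric paths (relative to the robber territory) whose union carries a handle, and permanently guard the closed neighbourhoods of both paths. By Proposition~\ref{lem:shortr}, guarding the closed neighbourhood of a single such isometric path costs five cops, so one unfolding step commits $2 \times 5 = 10$ cops. Cutting the surface along these two guarded paths confines the robber to a region admitting a representation on a surface of genus $g-1$; the ten cops are never released, since they must keep the cut sealed for the remainder of the game.

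First I would iterate this unfolding $g$ times. After the $i$-th step the robber is trapped in a sub-representation of genus $g-i$, and a fresh block of ten cops is dedicated to the corresponding cut while the previously committed cops continue to hold their earlier cuts. Thus after $g$ steps, $10g$ cops are deployed and the robber territory is a string graph representable on a genus-$0$ surface, i.e.\ an ordinary (planar-surface) string graph.

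Second, on this residual genus-$0$ string graph I would invoke Theorem~\ref{th:string}: thirteen further cops suffice to capture the robber there. Summing the $10g$ cops holding the $g$ cuts and the $13$ cops finishing on the genus-$0$ graph yields $\mathsf{c}(g\text{-}\textsf{GENUS STRING}) \le 10g + 13$.

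I expect the only points requiring care---rather than a genuine obstacle---to be verifying that each cut truly lowers the genus by one and that the confined region inherits a valid string representation on the reduced surface, so that the recursion terminates at a bona fide genus-$0$ string graph to which Theorem~\ref{th:string} applies. Both facts are precisely what the unfolding argument of~\cite{gavenciak} already establishes; the present statement differs from their $10g+15$ bound solely in the final step, where our $13$-cop strategy supersedes their $15$-cop strategy. Hence the corollary follows immediately.
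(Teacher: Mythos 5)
Your proposal is correct and matches the paper's argument exactly: the paper likewise keeps the Gaven\v{c}iak et al.\ unfolding scheme (10 committed cops per genus, via Proposition~\ref{lem:shortr}) untouched and merely substitutes the final genus-$0$ capture with the 13-cop strategy of Theorem~\ref{th:string}, which the paper treats as immediate. Your write-up is in fact more detailed than the paper's one-sentence justification.
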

For graphs having a planar representation on a surface of genus $g$, better unfolding techniques have been used. Let $G$ be a graph having genus $g$. Schroeder~\cite{schroeder} showed that $ \mathsf{c}(G) \leq \lfloor \frac{3g}{2}\rfloor + 3$. Later, Bowler et al.~\cite{bowler} improved the upper bound further and proved that $ \mathsf{c}(G) \leq \frac{4g}{3} + \frac{10}{3}$. It will be interesting to see if similar techniques can be used to improve the bounds on $\mathsf{c}(g$-$\textsf{GENUS STRING})$. Moreover, we propose the following. 

\begin{question}
Let $C$ be an isometric cycle in $G$. What is the least number of cops that can guard $N[C]$ in $G$?
\end{question}
Observe that if the answer to above question is some constant $c <10$, then we can unfold a genus by $c$ cops and therefore, we have $\mathsf{c}(g$-$\textsf{GENUS STRING}) \leq c\cdot g +13$.

Another interesting direction would be to see if the techniques used in this paper can be used to improve the cop number of unit disk graphs from 9 (Beveridge et al.~\cite{udg}) to 7. Moreover, recently de la Maza and Mohar~\cite{newMohar} characterized all 1-guardable graphs using the notion of ``wide shadow'', and used it to show that 3 cops always have a winning strategy on planar graphs even if at most 2 of the cops can move in one round. It may be interesting to study if guarding the neighborhoods of wide shadows can help improve the cop number of several graph classes.

\bibliographystyle{plainurl}
\bibliography{main}


\end{document}